\renewcommand{\baselinestretch}{1.6}
\def\singlespace{\def\baselinestretch{1}\@normalsize}
\newtheorem{lemma}{Lemma}
\newtheorem{theorem}{Theorem}
\newtheorem{remark}{Remark}
\newtheorem{corollary}{Corollary}
\renewcommand{\hat}{\widehat}
\newcommand{\be}{\begin{equation}}
\newcommand{\ee}{\end{equation}}
\newcommand{\beaa}{\begin{eqnarray*}}
\newcommand{\eeaa}{\end{eqnarray*}}
\newcommand{\bea}{\begin{eqnarray}}
\newcommand{\eea}{\end{eqnarray}}
\newcommand{\bY}{\mathbf{Y}}
\newcommand{\bX}{\mathbf{X}}
\newcommand{\bU}{\mathbf{U}}
\newcommand{\bZ}{\mathbf{Z}}
\newcommand{\bV}{\mathbf{V}}
\newcommand{\bW}{\mathbf{W}}
\newcommand{\bepsilon}{\bm{\varepsilon}}
\newcommand{\bfeta}{\bm{\eta}}
\newcommand{\mbE}{\mathbb{E}}
\newcommand{\btheta}{\bm{\theta}}
\newcommand{\bgamma}{\bm{\gamma}}
\newcommand{\bSigma}{\bm{\Sigma}}
\begin{document}

\title{
Double-Estimation-Friendly Inference for High-Dimensional Measurement Error Models with Non-Sparse Adaptability}

\author{Shijie Cui$^a$, Xu Guo$^b$, Songshan Yang$^c$ and Zhe Zhang$^d$\\
 $^a$Pennsylvania State University, $^b$Beijing Normal University\\
$^c$Renmin University of China,
$^d$University of North Carolina at Chapel Hill}

\date{}
\maketitle
\renewcommand{\baselinestretch}{1.1}
\baselineskip=18pt
\begin{abstract}
{In this paper, we introduce an innovative testing procedure for assessing individual hypotheses in high-dimensional linear regression models with measurement errors. This method remains robust even when either the $X$-model or $Y$-model is misspecified. We develop a double robust score function that maintains a zero expectation if one of the models is incorrect, and we construct a corresponding score test. We  show the asymptotic normality of our approach in a low-dimensional setting, and then extend it to the high-dimensional models. Our analysis of high-dimensional settings explores scenarios both with and without the sparsity condition, establishing asymptotic normality and non-trivial power performance under local alternatives. Simulation studies and real data analysis demonstrate the effectiveness of the proposed method.
}

\end{abstract}

\par \vspace{9pt}
\noindent%
{\it Keywords:}  High-dimensional, Statistical inference, Measurement error, Model misspecification, Double-estimation-friendly. 
\vfill

\section{Introduction}
Measurement errors are common in the real world. For example, the measurements of indexes of patients in clinical trials that are difficult to measure or that rely on self-reporting are often inaccurate. The validity of any medical study can be potentially affected by measurement error. Moreover, measurement error can also occur in high dimensional modeling scenarios. For instance, consider a case where we have data on a patient's self-reported weekly alcohol consumption along with the expression levels of tens of genes from the same patient. The self-reported weekly alcohol consumption of the patient can have possible measurement errors. If we want to build a model to infer the relationship between a specific disease and the amount of alcohol consumption, while removing the effect of genes, we need to consider a high dimensional measurement error model. For examples of measurment error models and its application, see \citet{chen2005measurement}, \citet{fuller2009measurement}, \citet{schennach2016recent} and \citet{wang2023variable}. However, most studies on inference of measurement error models assume low dimensional settings with fixed covariate dimensions. As a result, there is still a paucity of methods for inferring measurement error models in high dimensional settings. \citet{li2021inference} proposed a corrected decorrelated score test for high dimensional measurement error models based on the linear model assumption. 

The validity of the statistical inference in high dimensional models always relies on explicit model assumptions. A lot of them are based on linear or generalized linear model assumptions. In this case, one stream of methods for statistical inference of high dimensional models are based on de-biasing, de-sparisifing or de-correlation of regularized M-estimators \citep{zhang2014confidence, van2014asymptotically, javanmard2014confidence}, and \citealt{ning2017general}). Another newly developed method is by partial penalizing on the parameter of interests \citep{shi2019linear}. However, in practice, the specific linear model assumption is always impractical and misspecified. The statistic based on model-misspecification has to tolerate the efficiency loss. Thus, it is of interest to develop an inference procedure that can accommodate potential misspecifications. 

The inference of high dimensional misspecified models without any measurement errors has been discussed in the literature. One commonly used approach is to construct the score function, that satisfies the double robust property. It means that this type of score function is locally insensitive to the estimation error, and can obtain a consistent estimator with asymptotical normality when either one of the main model or an ansillary model is correctly specified. \citet{chernozhukov2018double} introduced the general framework, and \citet{buhlmann2015high} demonstrated that the de-sparsified Lasso procedure proposed by \citet{van2014asymptotically} can be adapted to the inference of high dimensional misspecified models. However, these works relies on the sparsity conditions on both models, and need stronger conditions for the main models.   This work is closely related to the work under the approximate sparsity condition \citep{belloni2014inference}. This condition implies that a linear combination of a limited number of variables can approximate the unknown true mean function within a controllable error margin. In practice, the sparsity assumption is not checkable and often violated. 
In contrast, \citet{zhu2017breaking}, \citet{zhu2018linear} and \citet{zhu2018significance} proposed a valid inferential procedure by utilizing a Dantzing selector type estimator for dense coefficients.

Additionally, \citet{shah2023double} introduced a novel property called the "double-estimation-friendly" (DEF) property, which can be used to test for the significance of a variable $X$ for a response variable $Y$, controlling for other $p$ covariates $Z$. The DEF property implies that an inferential procedure is valid when either the model for $X$ regressed on $Z$, or the model for $Y$ regressed on $(X, Z^\top)^\top$, is correctly specified. \citet{shah2023double} demonstrated that their testing procedure can satisfy the DEF property, even for misspecified linear models and misspecified generalized linear models.

This article aims to develop a new inference procedure for testing the individual coefficient of a covariate that has measurement error and cannot be directly observed. In the high dimensional setting, in addition to the sparse setting, we explore a sparsity-adaptive procedure that allows for inference by relaxing the sparsity assumption. Our main contribution can be summarized into two aspects.

First, we propose a new inference procedure for one component coefficient that satisfies the DEF property, which is valid when either $X$ regressed on $\bZ$ is a sparse linear model or $Y$ regressed on $(X, \bZ^\top)^\top$ is a sparse linear model, allowing one of the two models to be misspecified. Specifically, we construct a double robust score function that permits the linear assumption of one of the models to be violated. We provide a theoretical guarantee for the asymptotic normality of the test statistic under mild conditions in both low-dimensional and high-dimensional settings. This method differs from the corrected score method \citep{ning2017general, bradic2020fixed}, as it also accounts for the measurement error in $X$. 

Second, we introduce the sparsity-adaptive procedure in which the sparsity assumption of one of the models can be violated, and we propose a methodology for inference under this type of violation in Section \ref{sec5.2.43/30}. Unlike the usual requirement for obtaining a consistent estimator, we only require that the estimator's gradient is close to zero. Under some mild conditions, we show that the asymptotic normality of the test statistic is still valid. 

The rest of this paper is organized as follows. Section \ref{sec5.23/30} presents the motivation and the formula of the double robust moment conditions. The correpsonding inference procedures for this type of misspecification are given in Section \ref{sec5.2.13/30} for low dimensional models and Section \ref{sec5.2.23/30} for high dimensional models. Section \ref{sec5.2.43/30} investigated the inference procedure by relaxing the sparsity condition in the high dimensional setting. We demonstrate the performance of our approach through simulation studies in Section \ref{sec5.33/30} and real data analysis in clinical trials in Section \ref{sec5.43/30}. A brief discussion is given in Section \ref{discussion}. Technical lemmas and proofs are presented in the Appendix.


Some notations are introduced before we move on. For a subgaussion random variable $S$, we define the subgaussian norm of $S$, $||S||_{\psi_2}$, as the smallest 
$K$ such that $E[exp(\lambda S)]\leq exp(\dfrac{\lambda^2K^2}{2})$ for all $\lambda$ holds. The sub-exponential norm of a random variable $X$ is denoted as $\|X\|_{\psi_{1}}=\sup _{q \geqslant 1} q^{-1}\left\{\mbE\left(|X|^{q}\right)\right\}^{1 / q}$
Note that $\|X\|_{\psi_{1}}<C_{1}$ for some constant $C_1$, if $X$ is sub-exponential. The sub-Gaussian
norm of $X$ is denoted as $\|X\|_{\psi_{2}}=\sup _{q \geqslant 1} q^{-1 / 2}\left\{\mbE\left(|X|^{q}\right)\right\}^{1 / q}$. Note that $\|X\|_{\psi_{2}}<C_{2}$ for
some constant $C_2$, if $X$ is sub-Gaussian.

\section{Test Statistics Construction}\label{sec5.23/30}
In this section, we introduce the model setting that incorporte the structure of measurement error and model-misspecification. To begin with, we introduce the notations: $[\bX,\bZ]\in \mathbb{R}^{n\times (p+1)}$ with $\bX=[X_1,\cdots,X_n]^\top\in\mathbb{R}^{n}$ and $\bZ=[\bZ_1^\top,\cdots,\bZ_n^\top]^\top\in\mathbb{R}^{n\times p}$ being the design matrix with sample size $n$. Each $\bZ_i=[Z_{i,1},\cdots,Z_{i,p}]^\top$ is a $p$-dimensional vector. $\bY=(Y_1,\cdots,Y_n)^\top\in\mathbb{R}^n$ is the response. Because of the measurement error, we cannot observe $\bX$. Instead, we observe $\bW$:
\begin{eqnarray}
\bW=\bX+\bU.
\end{eqnarray}
Here $\bU=[U_1,\cdots,U_n]^\top$ is the measurement error independent of $\bX,\bZ$ and $\bepsilon$. $\mbE(U_i)=0$ and $Var(U_i)=\sigma^2_U$ is assumed to be known.

Then, we consider the following general measurement error model setting:
\begin{eqnarray}\label{5.120220501}
	\begin{aligned}
Y \text{- Model:}\ \  Y_i &= X_i\beta+ g\left(\bZ_i\right) + \varepsilon_i,\\
X\text{- Model:}\ \ X_i & = f(\bZ_i) + \eta_i,
\   \end{aligned}
\end{eqnarray}
where  $\varepsilon_i \in\mathbb{R}$ is the error term satisfying $\mbE(\varepsilon_i)=0$, $Var(\varepsilon_i)=\sigma^2_\varepsilon$ and is assumed to be independent of $\bX$ and $\bZ$. Similarly, $\bfeta$ is random error independent of  $\bU$ and $\bepsilon$. We also assume that $\mbE(\eta_i|\bZ_i)=0$ and $Var(\eta_i)=\sigma_\eta^2$.  We are interested in testing 
\begin{equation}\label{null}
H_0: \beta=\beta^* \quad\text{v.s.}\quad H_1: \beta\neq\beta^*.
\end{equation}
Here $\beta^*$ is a specific value. 

In the previous literature, such as the debiased lasso \citep{javanmard2014confidence, zhang2014confidence, van2014asymptotically}, they required both $g(\bZ_i) = \bZ_i^{\top}\bgamma_0 $ and $f(\bZ_i) = \bZ_i^{\top}\btheta_0$ are true models for some sparse vectors $\bgamma_0$ and $\btheta_0$ given by
\begin{eqnarray*}
     \begin{aligned}
        &\bgamma_0 = \operatorname{argmin}\limits_{\bgamma} \mbE\left(Y_i - X_i\beta^* - \bZ_i^{\top}\bgamma\right)^2,\\
        &\btheta_0 = \operatorname{argmin}\limits_{\btheta} \mbE\left(X_i  - \bZ_i^{\top}\btheta\right)^2.
    \end{aligned}   
\end{eqnarray*}
In the work of \cite{zhu2018significance}, they allowed one of $\bgamma_0$ and $\btheta_0$ is a dense vector, but they still constrained both $Y$-model and $X$-model to the linear model framework. 

In our work, we relax the linear model assumption, and aim to construct test statistics that can be valid when only one of the $g(\bZ_i) = \bZ_i^{\top}\bgamma_0$ or $f(\bZ_i) = \bZ_i^{\top}\btheta_0$ is specified correctly. To be explicit, the inference procedure is valid when the true model is one of the following:
\begin{itemize}
\item[(a)]\label{amodel20220430}$Y_i=X_i\beta_0+\bZ_i^\top\bgamma_0+\varepsilon_i, W_i=X_i+U_i,  X_i=f(\bZ_i)+\eta_i.$ $f$ is an unknown  function. 
\item[(b)]\label{bmodel20220430}$Y_i=X_i\beta_0+g(\bZ_i)+\varepsilon_i,  W_i=X_i+U_i, X_i=\bZ_i^\top\btheta_0+\eta_i$. $g$ is an unknown function. 
\end{itemize}
The presence of two models is not only common in high-dimensional statistical inference but also has broad real-world applications. For instance, mediation analysis has been applied in various fields such as finance, psychology, communication research, genetics, and epidemiology. In mediation analysis, researchers investigate the mechanism through which an independent variable (predictor) affects a dependent variable (outcome) through a third variable, known as a mediator. This approach involves the introduction of both an outcome model and a mediation model. For further theory and applications, see \citet{mackinnon2012introduction}, \citet{baron1986moderator}, \citet{huang2015igwas}, \citet{huang2014joint} and \citet{cai2022new}.

Our method is motivated by the observation that when both (a) and (b) hold, the original parametric null hypothesis (\ref{null}) can be transformed into the following moment condition:
\begin{align}\label{5.520220501}
 \mbE\{\left(W_i-\bZ_i^{\top} \btheta_0\right)\left(V_i-\bZ_i^{\top} \bgamma_0\right)+\sigma_{U}^{2} \beta^{*}\}=0.
\end{align}
with $V_i :=Y_i-W_i\beta^*$ as a pseudo-response and $\bV=[V_1,\cdots,V_n]^\top$.  Moment condition ($\ref{5.520220501}$) are double-robust in the sense that if only one of the models from (a) and (b) holds, the condition is still satisfied.
Under the model (a) in which $\bZ_i^{\top}\btheta_0$ is not the true formula of $f\left(\bZ_i\right)$, the moment condition can be rewritten as
\begin{align}\label{dr: a}
    \mbE\{\left(U_i+\eta_i + f(\bZ_i) - \bZ_i^{\top}\btheta_0\right)\left(\varepsilon_i - U_i\beta^*\right) + \sigma_U^2\beta^*\} = 0.
\end{align}    
For the model (b), by similar process, we have
\begin{align}\label{dr: b}
  \mbE\{\left(U_i+\eta_i\right)\left(\varepsilon_i - U_i + g(\bZ_i) -\bZ_i^{\top}\bgamma_0\right) + \sigma_U^2\beta^*\} = 0.  
\end{align}
This double robust property implies that a valid inference procedure can be constructed based on this moment condition, requiring only one of the $X$-model or $Y$-model to satisfy a linear model assumption, while the other can be an arbitrary regression model. 

\begin{remark}
This moment condition is closely related to the corrected score method 
\citep{nakamura1990corrected} and the score decorrelation method \citep{ning2017general}. In the case where $\varepsilon_i$ is distributed as a Gaussian distribution and the model in equation (\ref{5.120220501}) is corrected, the corrected score equations for $\bgamma_0$ can be defined as:
\begin{align}
E[S_{\beta}(\beta^*, \bgamma_0)]&=E[W_i(V_i-\bZ_i^\top\bgamma_0)+\sigma_{U}^{2} \beta^{*}] = 0,
\\
E[S_{\bgamma}(\beta^*, \bgamma_0)]&=E[\bZ_i(V_i-\bZ_i^\top\bgamma_0)] = 0
\end{align}
up to a constant difference. The moment condition considers a corrected decorrelated score function  as
\begin{eqnarray}
S(\beta^*, \bgamma_0)=S_{\beta}(\beta^*, \bgamma_0)-\btheta_0^\top S_{\bgamma}(\beta^*, \bgamma_0).
\end{eqnarray}
Under $H_0$, we have $\mbE(S(\beta^*, \bgamma_0))=0$. This construction also satisfies $\mbE\left(S\left(\beta^*,\bgamma_0\right)S^{\top}_{\bgamma}\left(\beta^*,\bgamma_0\right)\right) = \bm{0}$, i.e., 
$$
\mbE\left(\bZ_i\left(W_i - \bZ_i^{\top}\btheta_0\right)\left(V_i - \bZ_i^{\top}\bgamma_0\right)^2\right)  = \bm{0}. 
$$
This motivates us to consider the moment condition (\ref{5.520220501}).
\end{remark}
\subsection{Low-Dimensional Setting}\label{sec5.2.13/30}

Based on (\ref{5.520220501}), we will construct an empirical test statistic based on the sample. Firstly, we start from the low-dimensional setting when  $p$ is fixed. To benefit the analysis, we introduce the following notations. Denote $\bSigma=E[(X_1, \bZ_1^T)^T(X_1, \bZ_1^T)], \bSigma_{X,X}=\mbE(X_1^2),$ $\bSigma_{Z,Z}=E[Z_1Z_1^T], \bSigma_{Z, W}=E[Z_1 W_1]$. $\bSigma_{Z, W}=\bSigma_{Z, X}$. Let $\lambda_{\max}(\bSigma)$ and $\lambda_{\min}(\bSigma)$ be the maximal and minimal eigenvalues of $\bSigma$.

Motivated by the moment condition (\ref{5.520220501}), consider the following test statistic

\begin{equation}\label{score}
T=\frac{1}{\sqrt n}\sum_{i=1}^n\{(W_i-\bZ_i^{\top}\hat{\btheta})(Y_i-W_i\beta^*-\bZ_i^{\top}\hat{\bgamma})+\sigma^2_U\beta^*\}.
\end{equation}\label{equa2_8}
Here $\hat{\btheta}$ and $\hat{\bgamma}$ are corresponding least squares estimators. To be specific,
\begin{eqnarray*}
&\hat{\btheta}=[\frac{1}{n}\sum_{i=1}^n \bZ_i\bZ_i^{\top}]^{-1}\frac{1}{n}\sum_{i=1}^n \bZ_iW_i=:\hat\bSigma_{Z,Z}^{-1}\hat\bSigma_{Z, W},\\
&\hat{\bgamma}=[\frac{1}{n}\sum_{i=1}^n \bZ_i\bZ_i^{\top}]^{-1}\frac{1}{n}\sum_{i=1}^n \bZ_i(Y_i-W_i\beta^*)=:\hat\bSigma_{Z,Z}^{-1}\hat\bSigma_{Z,Y-W\beta^*}.
\end{eqnarray*}

\begin{remark}
   It is noteworthy that $\bZ_i^{\top}\bgamma_0$ and $\bZ_i^{\top}\btheta_0$ should be considered as the best linear approximations of $g(\bZ_i)$ and $f(\bZ_i)$, respectively. Even if one of them is not the true function in our setting, obtaining the corresponding consistent estimators is still crucial to ensure the proposed procedure works. In fact, when one of the models in (\ref{5.120220501}) is misspecified lies in the fact that no matter whether $Y$-model or $X$-model holds or not, we have $\hat{\btheta}-\btheta_0=O_p(n^{-1/2})$ and $\hat{\bgamma}-\bgamma_0=O_p(n^{-1/2})$ under mild conditions which will be specified later. Specifically, suppose model (a) holds, $\btheta_0 = \bSigma_{\bZ,\bZ}^{-1}\mbE\left(\bZ_1W_1\right) = \bSigma_{\bZ,\bZ}^{-1}\mbE\left(\bZ_1f\left(\bZ_1\right)\right)$, it is straightforward to verify that $\hat{\btheta}$ is a root-n consistent estimator under assumption (A1) and (A2). Consequently, we can show $T = \frac{1}{\sqrt n}\sum_{i=1}^n\{(\varepsilon_i-U_i\beta^*)(X_i-\bZ_i^{\top}\btheta^*+U_i)+\sigma^2_U\beta^*\}+o_p(1)$.
\end{remark}

To consistently estimate asymptotic variance of $T$, we further define
\begin{equation}\label{score2}
\hat\sigma^2=\frac{1}{n}\sum_{i=1}^n\{(W_i-\bZ_i^{\top}\hat{\btheta})(Y_i-W_i\beta^*-\bZ_i^{\top}\hat{\bgamma})+\sigma^2_U\beta^*\}^2.
\end{equation}
where $\sigma^2 = \mbE\left(\left(W_1 - \bZ_1^{\top}\btheta_0\right)\left(Y_1 - W_i\beta^* - \bZ_1^{\top}\bgamma_0\right) + \sigma_U^2\beta^*\right)^2$. we 
Finally  define
\begin{equation}\label{5.1120220501}
T_{DF}=\frac{T}{\hat\sigma}=\frac{1}{\sqrt n\hat\sigma}\sum_{i=1}^n\{(W_i-\bZ_i^{\top}\hat{\btheta})(Y_i-W_i\beta^*-\bZ_i^{\top}\hat{\bgamma})+\sigma^2_U\beta^*\}.
\end{equation}

 The construction of $T_{DF}$ is similar to the Rao score test and decorrelated score test \citep{ning2017general}, except the term $\sigma_U^2\beta^*$. This term quantify the influence of measurement error $U$, and will disappear without the measurement error setting. 

To establish the asymptotic propoerty of $T_{DF}$, we introduce the following conditions
\begin{itemize}
    \item[(A1)] \label{subgau} $||\varepsilon_i||_{\psi_{2}}, ||U_i||_{\psi_{2}}, ||\eta_i||_{\psi_{2}}\left\|Z_{i,j}\right\|_{\psi_{2}}, \left\|X_{i}\right\|_{\psi_{2}},\|\bZ_i^T\btheta_0\|_{\psi_{2}}$ and $\|\bZ_i^T\bgamma_0\|_{\psi_{2}}$ are uniformly bounded by $K_1$ for $i=1,\cdots, n$ and $j=1,\cdots, p$. 
    \item[(A2)] $f$ is bounded on the support of $Z_1$ when (a) holds and $g$ is bounded on the support of $Z_1$ when (b) holds.
\end{itemize}

In Condition (A1), we assume the relevant random variables are all sub-Gaussian distributed with uniformly upper bounded Orlicz norm. This is commonly used in the literature of asymptotic analysis \citep{ning2017general,li2021inference}. It can further to ensure that the terms all have upper bounded fourth-order moment, which is important to check the Lyapunov's condition. Condition (A2) is to avoid extreme values of regression functions.

We have the following theoretical results:
\begin{theorem}\label{thm5.120220502}
Suppose (A1)-(A2) hold, under the null hypothesis, $T_{DF}\stackrel{d}{\longrightarrow} N(0,1)$ if  model (a) or (b) holds.
\end{theorem}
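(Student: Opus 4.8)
The plan is to show that $T = T^* + o_p(1)$ where $T^* = \frac{1}{\sqrt n}\sum_{i=1}^n \psi_i$ is a sum of i.i.d.\ mean-zero terms with finite variance, apply a classical CLT to $T^*$, and show $\hat\sigma^2 \stackrel{p}{\to} \sigma^2 = \mathrm{Var}(\psi_1) > 0$, so that Slutsky's theorem delivers $T_{DF} = T/\hat\sigma \stackrel{d}{\to} N(0,1)$. The i.i.d.\ term should be $\psi_i = (W_i - \bZ_i^\top\btheta_0)(V_i - \bZ_i^\top\bgamma_0) + \sigma_U^2\beta^*$, whose expectation is zero by the double-robust moment condition \eqref{5.520220501} (more precisely by \eqref{dr: a} under model (a) and \eqref{dr: b} under model (b)). A single argument covers both cases because only the algebraic bookkeeping of which cross terms vanish differs.

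First I would record the root-$n$ consistency of the plug-in estimators: regardless of which of (a) or (b) holds, $\hat\btheta - \btheta_0 = O_p(n^{-1/2})$ and $\hat\bgamma - \bgamma_0 = O_p(n^{-1/2})$. This follows from the explicit least-squares formulas, the law of large numbers for $\hat\bSigma_{Z,Z} \to \bSigma_{Z,Z}$ (invertible, with $\lambda_{\min}(\bSigma)$ bounded away from zero), a CLT for $\hat\bSigma_{Z,W} - \bSigma_{Z,W}$ and $\hat\bSigma_{Z,Y-W\beta^*} - \bSigma_{Z,Y-W\beta^*}$, and the identification $\btheta_0 = \bSigma_{Z,Z}^{-1}\mbE(\bZ_1 W_1)$, $\bgamma_0 = \bSigma_{Z,Z}^{-1}\mbE(\bZ_1 V_1)$; here (A1) gives the sub-Gaussian tails (hence finite fourth moments) needed for these limits and (A2) controls $f$ or $g$ so that the relevant moments exist. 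Next I would expand
\begin{align*}
T &= \frac{1}{\sqrt n}\sum_{i=1}^n\{(W_i - \bZ_i^\top\btheta_0)(V_i - \bZ_i^\top\bgamma_0) + \sigma_U^2\beta^*\} \\
&\quad - (\hat\btheta - \btheta_0)^\top\frac{1}{\sqrt n}\sum_{i=1}^n \bZ_i(V_i - \bZ_i^\top\bgamma_0) - (\hat\bgamma - \bgamma_0)^\top\frac{1}{\sqrt n}\sum_{i=1}^n \bZ_i(W_i - \bZ_i^\top\btheta_0) \\
&\quad + \sqrt n\,(\hat\btheta - \btheta_0)^\top\Big(\frac{1}{n}\sum_{i=1}^n \bZ_i\bZ_i^\top\Big)(\hat\bgamma - \bgamma_0).
\end{align*}
The leading term is $T^*$. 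For the two cross terms I would write $(\hat\btheta - \btheta_0)^\top = O_p(n^{-1/2})$ and argue that $\frac{1}{\sqrt n}\sum_i \bZ_i(V_i - \bZ_i^\top\bgamma_0) = O_p(1)$ by a CLT (it has mean zero by the normal equations defining $\bgamma_0$), and symmetrically for the other; hence each cross term is $O_p(n^{-1/2}) \cdot O_p(1) = o_p(1)$. Wait — one must be careful: $\frac{1}{\sqrt n}\sum_i \bZ_i(V_i - \bZ_i^\top\bgamma_0)$ is $O_p(1)$, and multiplying by $\hat\btheta - \btheta_0 = O_p(n^{-1/2})$ gives $o_p(1)$, which is what we want. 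The quadratic remainder is $\sqrt n \cdot O_p(n^{-1/2})\cdot O_p(1)\cdot O_p(n^{-1/2}) = O_p(n^{-1/2}) = o_p(1)$. This is exactly the decorrelation/Neyman-orthogonality phenomenon: the score is insensitive to first-order estimation error in the nuisance directions because the relevant population cross-derivatives vanish.

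Then I would apply the Lindeberg–Lévy CLT to $T^* = n^{-1/2}\sum_i\psi_i$: the $\psi_i$ are i.i.d.\ with $\mbE\psi_1 = 0$ (the double-robust condition) and $\mathrm{Var}(\psi_1) = \sigma^2 < \infty$ because $\psi_1$ is a polynomial of bounded degree in sub-Gaussian variables so has all moments; under the alternative-free null with $\sigma_U^2 > 0$ and the variance components $\sigma_\varepsilon^2, \sigma_\eta^2, \sigma_U^2$ one checks $\sigma^2 > 0$. Finally, for consistency of $\hat\sigma^2$ I would expand the summand in \eqref{score2} the same way, replacing $\hat\btheta,\hat\bgamma$ by $\btheta_0,\bgamma_0$ at a cost of $o_p(1)$ (using root-$n$ consistency plus a uniform LLN / moment bound on the resulting products, again via (A1)–(A2)), and apply the LLN to $\frac1n\sum_i\psi_i^2 \to \mbE\psi_1^2$; combined with $\mbE\psi_1 = 0$ this gives $\hat\sigma^2 \stackrel{p}{\to}\sigma^2$. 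Slutsky then yields $T_{DF} \stackrel{d}{\to} N(0,1)$. The main obstacle I anticipate is the bookkeeping that shows $\mbE\psi_1 = 0$ and the cross-term cancellations hold uniformly across \emph{both} regimes (a) and (b) with a misspecified companion model — one has to carefully substitute the respective true data-generating expressions into $W_i - \bZ_i^\top\btheta_0$ and $V_i - \bZ_i^\top\bgamma_0$ (as in \eqref{dr: a}–\eqref{dr: b}) and verify that the terms involving $f(\bZ_i) - \bZ_i^\top\btheta_0$ (resp.\ $g(\bZ_i) - \bZ_i^\top\bgamma_0$) are orthogonal to the complementary factor in expectation, which is precisely where the normal equations defining $\btheta_0$ and $\bgamma_0$, together with the independence structure of $U_i,\eta_i,\varepsilon_i$, are used.
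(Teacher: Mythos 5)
Your proposal is correct and follows essentially the same route as the paper: expand $T$ around the population nuisance values $\btheta_0,\bgamma_0$, kill the cross terms using the orthogonality $\mbE[\bZ_1(W_1-\bZ_1^\top\btheta_0)]=0$ and $\mbE[\bZ_1(V_1-\bZ_1^\top\bgamma_0)]=0$ together with the root-$n$ rates of the least-squares estimators, apply the CLT to the leading i.i.d.\ mean-zero sum, and prove consistency of $\hat\sigma^2$ before invoking Slutsky. The only cosmetic difference is that the paper makes one of the two cross terms exactly zero via the finite-sample normal equations of the OLS fit (so no quadratic remainder appears), whereas you bound both cross terms and the quadratic remainder by rates; both arguments are valid.
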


Theorem \ref{thm5.120220502} establishes the asymptotic normality of the newly proposed statistic. The asympototic normality holds when one of the models in (\ref{5.120220501}) is misspecified. Therefore we show that the statistic can have the DEF property. Different from existing studies, our procedure only requires knowledge of the variance of $U$, which is commonly adopted in the literature.


Next consider the following local alternative that the true value of $\beta$ is $\beta_0=\beta^*+\dfrac{c}{\sqrt n}$ for some $c>0$.
We have the following results:
\begin{theorem}\label{thm5.220220503}
Under the local alternative $\beta_0=\beta^*+\dfrac{c}{\sqrt n}$,
\begin{itemize}
\item When model (a) holds, it follows that $T_{DF}\rightarrow N\left(c\sigma^{-1}_1\sigma^2_{X,Z},1\right) $ in distribution, where  $\sigma^2_{X,Z}:=\bSigma_{X,X}-\bSigma^\top_{X,Z}\bSigma^{-1}_{Z,Z}\bSigma_{X,Z},  \sigma^2_1=E[(\varepsilon_1-U_1\beta^*)(f(Z_1)-Z^\top_1\btheta_0+\eta_1+U_1)+\sigma^2_U\beta^*]^2$.
\item When model (b) holds, it follows that $T_{DF}\rightarrow N\left(c\sigma^{-1}_2\sigma^2_{X,Z},1\right)$ in distribution, where $\sigma^2_2=E[\eta_1(\varepsilon_1-U_1\beta^*+g(Z_1)-Z^\top_1\bgamma_0)+\sigma^2_U\beta^*]^2$.
\end{itemize}
\end{theorem}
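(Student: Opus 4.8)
The plan is to reuse the Taylor-expansion machinery already developed for Theorem \ref{thm5.120220502}, only now tracking the non-vanishing mean term induced by the gap $\beta_0-\beta^*=c/\sqrt n$. Write $V_i=Y_i-W_i\beta^*$ and observe that under the local alternative $Y_i=X_i\beta_0+g(\bZ_i)+\varepsilon_i$, so $V_i = X_i\beta_0 - W_i\beta^* + g(\bZ_i)+\varepsilon_i = (\varepsilon_i-U_i\beta^*) + X_i(\beta_0-\beta^*) + g(\bZ_i)$ in the notation $W_i=X_i+U_i$; the extra piece relative to the null is $X_i(\beta_0-\beta^*) = X_i c/\sqrt n$. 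The first step is therefore to decompose
\[
T = \frac{1}{\sqrt n}\sum_{i=1}^n\Big\{(W_i-\bZ_i^\top\hbtheta)(V_i-\bZ_i^\top\hbgamma)+\sigma_U^2\beta^*\Big\}
   = T_0 + \frac{c}{n}\sum_{i=1}^n (W_i-\bZ_i^\top\hbtheta)\,X_i \;-\; \Delta,
\]
where $T_0$ is the null statistic already analyzed (it converges to $N(0,\sigma_j^2)$ after appropriate centering, as in Theorem \ref{thm5.120220502}) and $\Delta$ collects the perturbation that $\hbgamma$ itself picks up because the pseudo-response $V_i$ now contains the $c/\sqrt n$ drift.

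The second step is to pin down the two $c$-order contributions. For the explicit drift term, by the law of large numbers and root-$n$ consistency of $\hbtheta$ (Remark after \eqref{equa2_8}), $\frac{1}{n}\sum_i (W_i-\bZ_i^\top\hbtheta)X_i \to \mbE[(W_1-\bZ_1^\top\btheta_0)X_1] = \mbE[(X_1-\bZ_1^\top\btheta_0)X_1]$; under whichever model holds this equals $\bSigma_{X,X}-\bSigma_{X,Z}^\top\bSigma_{Z,Z}^{-1}\bSigma_{Z,X} = \sigma^2_{X,Z}$, using $\bSigma_{Z,W}=\bSigma_{Z,X}$ and the definition of $\btheta_0$ as the population least-squares coefficient. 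For $\Delta$: since $\hbgamma = \hbSigma_{Z,Z}^{-1}\frac1n\sum_i\bZ_i V_i$ and $V_i$ gains $X_i c/\sqrt n$, we get $\hbgamma-\bgamma_0^{(\beta^*)}$ acquires a term $\frac{c}{\sqrt n}\bSigma_{Z,Z}^{-1}\bSigma_{Z,X} + o_p(n^{-1/2})$; feeding this back through $T$ contributes $-\frac{c}{n}\sum_i(W_i-\bZ_i^\top\hbtheta)\bZ_i^\top\bSigma_{Z,Z}^{-1}\bSigma_{Z,X}$, which converges to $-c\,\mbE[(X_1-\bZ_1^\top\btheta_0)\bZ_1^\top]\bSigma_{Z,Z}^{-1}\bSigma_{Z,X}=0$ because $\mbE[(X_1-\bZ_1^\top\btheta_0)\bZ_1]=\bm 0$ by orthogonality of the population residual. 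Hence $\Delta=o_p(1)$ and the entire $c$-order mean shift comes from the explicit term, giving $T \stackrel{d}{\to} N(c\,\sigma^2_{X,Z},\sigma_j^2)$ with $j=1$ under (a) and $j=2$ under (b).

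The third step is to show that the variance estimator $\hat\sigma^2$ from \eqref{score2} is still consistent for $\sigma_j^2$ under the local alternative — this is immediate because the $c/\sqrt n$ perturbation enters $\hat\sigma^2$ only through $o_p(1)$ terms (the summands are squared and the drift is $O(n^{-1/2})$), so $\hat\sigma \stackrel{p}{\to}\sigma_j$ exactly as under $H_0$. Dividing, Slutsky gives $T_{DF}=T/\hat\sigma\stackrel{d}{\to} N(c\sigma_j^{-1}\sigma^2_{X,Z},1)$, matching the two stated cases with $\sigma_1^2,\sigma_2^2$ as written. The main obstacle is bookkeeping in the second step: one must verify carefully that the feedback of the $c/\sqrt n$ drift through \emph{both} estimated nuisance parameters $\hbtheta$ and $\hbgamma$ cancels against itself at order $c$, leaving only the single clean term $c\sigma^2_{X,Z}$ — in particular, confirming that $\mbE[(X_1-\bZ_1^\top\btheta_0)\bZ_1]=\bm 0$ continues to hold in \emph{each} of the two model scenarios (it does, since $\btheta_0$ is defined as the population projection in both cases) so that no residual $c$-order bias survives. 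The sub-Gaussian moment bounds of (A1)–(A2) are what justify the interchange of limits and the $o_p$ statements throughout, exactly as in the proof of Theorem \ref{thm5.120220502}.
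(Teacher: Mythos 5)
Your proposal is correct and follows essentially the same route as the paper: expand the pseudo-residual under the drift, identify $\frac{c}{n}\sum_i X_i(W_i-\bZ_i^{\top}\hat{\btheta})\to c\sigma^2_{X,Z}$ as the sole surviving mean shift, kill the feedback through $\hat{\bgamma}$ via orthogonality of the residuals of the $W$-on-$\bZ$ regression, and verify $\hat\sigma^2$ is unaffected at first order before applying Slutsky. The only cosmetic difference is that the paper cancels the feedback term exactly through the finite-sample normal equations $\sum_i\bZ_i(W_i-\bZ_i^{\top}\hat{\btheta})=\bm{0}$, whereas you invoke the population orthogonality $\mbE[(X_1-\bZ_1^{\top}\btheta_0)\bZ_1]=\bm{0}$ plus the law of large numbers, which is equally valid here.
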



The above results imply that our proposed test statistic $T_{DF}$ can detect a local alternative with a rate of $n^{-1/2}$. It is worth noting when  $f(\bZ_i) = \bZ_i^{\top}\bgamma_0$ and $g(\bZ_i) = \bZ_i^{\top}\btheta_0$, we have $$\sigma^2_1=\sigma^2_2=E[\eta_1(\varepsilon_1-U_1\beta^*)+\sigma^2_U\beta^*]^2.$$

\subsection{High-Dimensional Setting}\label{sec5.2.23/30}

Next, let's turn to the high-dimensional measurement error model, where $p$ is allowed to be larger than $n$ and grow with $n$. 

 We still consider test statistic (\ref{5.1120220501}). However,  the least squared estimator is not valid when $p>n$. Instead, we can make the statistic adaptive for high-dimensional settings by switching $\hat{\bgamma}$ and $\hat{\btheta}$ in (\ref{5.1120220501}) to their corresponding penalized estimators. Namely,  we consider the penalized linear regressions:
\begin{eqnarray}
 \hat{\bgamma}&:=&\underset{ \bgamma \in \mathbb{R}^{p}}{\arg \min }\left\{\dfrac{1}{2n}\sum_{i=1}^n(Y_i-W_i\beta^* -\bZ_i^\top\bgamma)^2+p_{\lambda_{\bgamma}} (\bgamma)\right\},\label{5.1220220501}
\\
\hat{\btheta}&:=&\underset{\beta \in \mathbb{R}^{p}}{\arg \min }\left\{\dfrac{1}{2n}\sum_{i=1}^n(W_i-\bZ_i^\top \btheta)^2 +p_{\lambda_{\btheta}}(\btheta)\right\}.\label{5.1320220501}
\end{eqnarray}
The forms of $T$ and $T_{DF}$ are unchanged.  Let $p_\lambda(\cdot)$ be one of $p_{\lambda_{\btheta}}(\cdot)$ and $p_{\lambda_{\bgamma}}(\cdot)$. $p_\lambda$ should satisfy the following conditions:
\begin{enumerate}[label=(C\arabic*),ref=(C\arabic*),series=myexample2]
\item\label{(C1)} $p_\lambda(0) = 0$ and $p_\lambda$ is symmetric around zero.
\item For $t > 0$, the function $p_\lambda(t)/t$ is nondecreasing.
\item The function $p_\lambda(t)$ is differentiable for all $t\neq 0$ and subdifferentiable at $t = 0$, with $\lim_{t\rightarrow 0^+}=p_{\lambda}^{\prime}(t)=\lambda L$ for some $L>0$.
\item  There exists $\mu > 0$ such that $p_{\lambda,\mu}(t):=p_\lambda(t)+\dfrac{\mu}{2} t^{2}$ is convex.
\end{enumerate}

The above conditions are the same as the Assumption 1 in \citet{loh2015regularized} and are essential for showing the consistency of $\hat{\bgamma}$ and $\hat{\btheta}$ to $\bgamma_0$ and $\btheta_0$.
The Lasso, the SCAD and the MCP penalty all satisfy these conditions. $\lambda_{\bgamma}(\cdot)$ and $\lambda_{\btheta}(\cdot)$ are the corresponding tuning parameters for (\ref{5.1220220501}) and (\ref{5.1320220501}).   In Lemma 2,  we will check the conditions in \citet{loh2015regularized}, and show the consistency rate of the estimator under the proposed objective function.

In order for a similar inference procedure to be valid, a sparsity assumption is required on both $\bgamma_0$ and $\btheta_0$. The sparsity levels for $\btheta_0$ and $\bgamma_0$ need to satisfy the following condition.
\begin{enumerate}[resume*=myexample2]
\item\label{c.43/30} $s_{\bgamma}:=\|\bgamma_0\|_0=o(\sqrt{n}/\log p),s_{\btheta}:=\|\btheta_0\|_0=o(\sqrt{n}/\log p)$.
\end{enumerate}

This sparsity condition is commonly used in the high-dimensional estimation and inference problem, and is crucial to ensure the consistency of $\hat{\bgamma}$ and $\hat{\btheta}$ \citep{javanmard2014confidence, ning2017general, li2021inference}. In practice, the sparsity assumption is hard to check and might be violated. We will propose a methodology that can work when sparsity assumption of one of $\bgamma_0$ and $\btheta_0$ is violated in Section \ref{sec5.2.43/30}.

To show asymptotic normality  of $T_{DF}$, the following conditions are  also required.
\begin{enumerate}[resume*=myexample2]
\item\label{consigma} There exists some constant $\underline{C}>0$ such that $\lambda_{\min}(\bSigma)\geq\underline{C}$ uniformly.
\item \label{Cond6}When model (a) holds, let $\sigma^2_1=E[(\varepsilon_1-U_1\beta^*)(f(Z_1)-Z^\top_1\btheta_0+\eta_1+U_1)+\sigma^2_U\beta^*]^2$. When model (b) holds, let $\sigma^2_2=E[\eta_1(\varepsilon_1-U_1\beta^*+g(Z_1)-Z^\top_1\bgamma_0)+\sigma^2_U\beta^*]^2.$ We assume $\sigma_1, \sigma_2$ are uniformly bounded below by  $\sigma_0>0$. 
\end{enumerate}

Conditions (C6) and (C7) are used to ensure the variance $\sigma_{X,\bZ}^2$, $\sigma_1^2$, and $\sigma_2^2$ are bounded from below,  which guarantees that the test statistic is valid and possesses non-zero efficiency.

\begin{theorem}\label{thm5.420220502}
Assume conditions (A1) - (A2), (C1) - (C7) hold. If $\lambda_{\bgamma} = D_1\log p/n$ and $\lambda_{\btheta} = D_2\log p/n$, for $D_1$,$D_2$ sufficiently large, then it follows that under null hypothesis, $T_{DF}\rightarrow N(0,1)$ in distribution if  model (a) or (b) holds.
\end{theorem}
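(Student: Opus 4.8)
The plan is to reduce the high-dimensional statement to the low-dimensional one (Theorem \ref{thm5.120220502}) by showing that replacing the least-squares estimators with the penalized estimators $\hat{\bgamma}, \hat{\btheta}$ from (\ref{5.1220220501})--(\ref{5.1320220501}) incurs only an $o_p(1)$ error in $T$. First I would invoke the consistency result promised in Lemma 2 (via \citet{loh2015regularized}, using (C1)--(C5)): under the choices $\lambda_{\bgamma}=D_1\log p/n$ and $\lambda_{\btheta}=D_2\log p/n$, one gets $\|\hat{\bgamma}-\bgamma_0\|_1 = O_p(s_{\bgamma}\sqrt{\log p/n})$ and $\|\hat{\bgamma}-\bgamma_0\|_2 = O_p(\sqrt{s_{\bgamma}\log p/n})$, and likewise for $\hat{\btheta}-\btheta_0$. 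The key point is that this holds regardless of which of (a), (b) is the true model, because $\bgamma_0$ and $\btheta_0$ are defined as the population least-squares (best linear approximation) coefficients and the design satisfies a restricted-eigenvalue-type condition thanks to (C6) ($\lambda_{\min}(\bSigma)\geq \underline C$) together with the sub-Gaussianity in (A1).

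\textbf{Decomposition of the test statistic.} Next I would expand
\[
T - T_0 = \frac{1}{\sqrt n}\sum_{i=1}^n\Big\{(W_i-\bZ_i^\top\hat{\btheta})(V_i-\bZ_i^\top\hat{\bgamma}) - (W_i-\bZ_i^\top\btheta_0)(V_i-\bZ_i^\top\bgamma_0)\Big\},
\]
where $T_0$ is the oracle statistic built from $\bgamma_0,\btheta_0$ and $V_i=Y_i-W_i\beta^*$. Writing $\hat{\btheta}-\btheta_0=\bdelta_{\btheta}$ and $\hat{\bgamma}-\bgamma_0=\bdelta_{\bgamma}$, the difference splits into three groups of terms: (i) a linear-in-$\bdelta_{\bgamma}$ term $-\tfrac{1}{\sqrt n}\sum_i (W_i-\bZ_i^\top\btheta_0)\bZ_i^\top\bdelta_{\bgamma}$; (ii) a linear-in-$\bdelta_{\btheta}$ term $-\tfrac{1}{\sqrt n}\sum_i (V_i-\bZ_i^\top\bgamma_0)\bZ_i^\top\bdelta_{\btheta}$; and (iii) the cross term $\tfrac{1}{\sqrt n}\sum_i (\bZ_i^\top\bdelta_{\btheta})(\bZ_i^\top\bdelta_{\bgamma})$. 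For the cross term I would bound it by $\sqrt n\,\|\bdelta_{\btheta}\|_2\|\bdelta_{\bgamma}\|_2\,\lambda_{\max}(\hat\bSigma_{Z,Z})$, which is $O_p(\sqrt n \cdot \sqrt{s_{\btheta}\log p/n}\cdot\sqrt{s_{\bgamma}\log p/n}) = O_p(\sqrt{s_{\btheta}s_{\bgamma}}\log p/\sqrt n) = o_p(1)$ by (C8). For the two linear terms I would use the double-robust structure of the moment condition: under either model (a) or (b), the population cross-moment $\mbE[(W_1-\bZ_1^\top\btheta_0)\bZ_1] = \bm 0$ (orthogonality of the $X$-model residual to $\bZ$, up to the $\btheta_0$ defining equation) and $\mbE[(V_1-\bZ_1^\top\bgamma_0)\bZ_1]=\bm 0$, so each linear term equals $\bdelta^\top$ times a mean-zero average of sub-exponential vectors; applying a Hölder bound $\|\bdelta_{\bgamma}\|_1 \cdot \|\tfrac1n\sum_i(W_i-\bZ_i^\top\btheta_0)\bZ_i\|_\infty$ together with the maximal inequality $\|\tfrac1n\sum_i(W_i-\bZ_i^\top\btheta_0)\bZ_i\|_\infty = O_p(\sqrt{\log p/n})$ gives $O_p(s_{\bgamma}\log p/\sqrt n)=o_p(1)$, and symmetrically for the $\bdelta_{\btheta}$ term.

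\textbf{Variance estimator and conclusion.} I would then show $\hat\sigma^2 \to \sigma_1^2$ (resp.\ $\sigma_2^2$) in probability by a parallel expansion of (\ref{score2}): the leading term $\tfrac1n\sum_i\{(W_i-\bZ_i^\top\btheta_0)(V_i-\bZ_i^\top\bgamma_0)+\sigma_U^2\beta^*\}^2$ converges by the LLN for sub-exponential-squared (hence heavier-tailed but still integrable fourth-moment) summands under (A1)--(A2), and the remainder is controlled by the same $\ell_1/\ell_\infty$ and restricted-eigenvalue bounds as above; (C7) keeps the limit bounded away from zero. Combining $T = T_0 + o_p(1)$, $\hat\sigma \to \sigma_{1}$ or $\sigma_2$, and the fact (from the proof of Theorem \ref{thm5.120220502}) that $T_0/\sigma_{1,2} \stackrel{d}{\to} N(0,1)$ via the Lindeberg--Feller CLT for the i.i.d.\ mean-zero array $\{(W_i-\bZ_i^\top\btheta_0)(V_i-\bZ_i^\top\bgamma_0)+\sigma_U^2\beta^*\}$ (whose mean-zero property under $H_0$ is exactly the double-robust moment condition (\ref{5.520220501})), Slutsky's theorem yields $T_{DF}\stackrel{d}{\to}N(0,1)$.

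\textbf{Main obstacle.} The delicate step is handling the two linear-in-$\bdelta$ terms when the corresponding model is \emph{misspecified}: there the residual $W_i-\bZ_i^\top\btheta_0$ (or $V_i-\bZ_i^\top\bgamma_0$) is not a ``clean'' noise term but contains the nonparametric part $f(\bZ_i)-\bZ_i^\top\btheta_0$ (or $g(\bZ_i)-\bZ_i^\top\bgamma_0$). What saves the argument is precisely the population orthogonality built into the least-squares definition of $\btheta_0$ and $\bgamma_0$ — $\mbE[(f(\bZ_1)-\bZ_1^\top\btheta_0)\bZ_1]=\bm 0$ — so the bias does not accumulate; but one must verify that the \emph{empirical} version of this orthogonality holds at rate $\sqrt{\log p/n}$ in $\ell_\infty$-norm despite $f$ (or $g$) being only bounded, not sub-Gaussian. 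This is where (A2) is used, and it is the place where the proof requires the most care.
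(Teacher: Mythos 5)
Your proposal is correct in outline and reaches the theorem by the same overall strategy as the paper (reduce $T$ to the oracle statistic built from $\btheta_0,\bgamma_0$, kill the bias terms by H\"older-type $\ell_1\times\ell_\infty$ bounds combined with the penalized-estimation rates, prove consistency of $\hat\sigma^2$ by a parallel expansion, then apply the CLT and Slutsky). The one substantive difference is in how the linear bias terms are controlled. The paper's decomposition is asymmetric: it writes $T=T_1+T_2$ with $T_2=(\bgamma_0-\hat{\bgamma})^\top n^{-1/2}\sum_i\bZ_i(W_i-\bZ_i^\top\hat{\btheta})$ and bounds the inner factor by the KKT/stationarity condition of the penalized problem, $\|n^{-1}\sum_i\bZ_i(W_i-\bZ_i^\top\hat{\btheta})\|_\infty\le C\lambda_{\btheta}$, while the remaining term $T_{12}=(\btheta_0-\hat{\btheta})^\top n^{-1/2}\sum_i(\varepsilon_i-U_i\beta^*)\bZ_i$ is handled by a sub-exponential maximal inequality exploiting $\varepsilon,U\perp\bZ$. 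You instead center both residuals at the population coefficients and use the definitional orthogonality $\mbE[(W_1-\bZ_1^\top\btheta_0)\bZ_1]=\bm 0$, $\mbE[(V_1-\bZ_1^\top\bgamma_0)\bZ_1]=\bm 0$ together with concentration; this is symmetric in the two nuisances and makes the role of (A2) (boundedness of $f$ or $g$, needed so that the misspecified residual times $Z_{i,j}$ is still sub-exponential) more explicit, whereas the KKT route sidesteps that concentration step for the misspecified direction. Both yield the same $O_p(s\log p/\sqrt n)=o_p(1)$ rate under (C5). One slip to fix: your cross-term bound $\sqrt n\,\|\bdelta_{\btheta}\|_2\|\bdelta_{\bgamma}\|_2\,\lambda_{\max}(\hat\bSigma_{Z,Z})$ is not valid when $p>n$, since the unrestricted top eigenvalue of the sample Gram matrix is then of order $p/n$, not $O_p(1)$; you should instead apply Cauchy--Schwarz with the in-sample prediction-error bounds $n^{-1}\sum_i(\bZ_i^\top\bdelta)^2=O_p(s\lambda^2)$ (the paper's Lemma 3), which delivers exactly the $O_p(\sqrt{s_{\btheta}s_{\bgamma}}\log p/\sqrt n)$ rate you state. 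Also note that the condition you cite for this step should be (C5), not (C8), which belongs to the sparsity-adaptive section.
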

Theorem~\ref{thm5.420220502} establishs the asymptotic normality of the proposed method under the high-dimensional setting with suitable choice of the tuning parameters. It can also have local power under local alternatives. We have the following results:
\begin{theorem}\label{thm5.4420220502}
Under the same conditons with Theorem \ref{thm5.420220502},  we have that under the local alternative $\beta_0=\beta^*+\dfrac{c}{\sqrt n}$,
\begin{itemize}
\item When model (a) holds, it follows that
$T_{DF}\rightarrow N\left(c\sigma^{-1}_1\sigma^2_{X,Z},1\right).$ in distribution, where  $\sigma^2_{X,Z}:=\bSigma_{X,X}-\bSigma^\top_{X,Z}\bSigma^{-1}_{Z,Z}\bSigma_{X,Z},  \sigma^2_1=E[(\varepsilon_1-U_1\beta^*)(f(Z_1)-Z^\top_1\btheta_0+\eta_1+U_1)+\sigma^2_U\beta^*]^2$.
\item When model (b) holds, it follows that
$T_{DF}\rightarrow N\left(c\sigma^{-1}_2\sigma^2_{X,Z},1\right)$ in distribution, where $\sigma^2_2=E[\eta_1(\varepsilon_1-U_1\beta^*+g(Z_1)-Z^\top_1\bgamma_0)+\sigma^2_U\beta^*]^2$.
\end{itemize}
\end{theorem}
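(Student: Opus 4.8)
The plan is to build on Theorem~\ref{thm5.420220502} (the null-hypothesis high-dimensional result) by tracking the additional deterministic shift induced by the local alternative $\beta_0=\beta^*+c/\sqrt n$. The starting point is the decomposition already used in Theorem~\ref{thm5.420220502}: write $T=\frac{1}{\sqrt n}\sum_i\{(W_i-\bZ_i^\top\hat\btheta)(V_i-\bZ_i^\top\hat\bgamma)+\sigma_U^2\beta^*\}$ and expand around the population projections $\btheta_0,\bgamma_0$. Under the local alternative, $V_i=Y_i-W_i\beta^*=X_i(\beta_0-\beta^*)+g(\bZ_i)+\varepsilon_i-U_i\beta^*=\frac{c}{\sqrt n}X_i+g(\bZ_i)+\varepsilon_i-U_i\beta^*$, so $V_i$ picks up an extra term $\frac{c}{\sqrt n}X_i$ relative to the null. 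Likewise $\bgamma_0$ (the best linear predictor of $V_i$ on $\bZ_i$) shifts by $\frac{c}{\sqrt n}\btheta_0$ (the best linear predictor of $X_i$ on $\bZ_i$), so that $V_i-\bZ_i^\top\bgamma_0$ acquires the extra term $\frac{c}{\sqrt n}(X_i-\bZ_i^\top\btheta_0)$. Feeding this into $T$, the leading extra contribution is $\frac{c}{\sqrt n}\cdot\frac{1}{\sqrt n}\sum_i (W_i-\bZ_i^\top\btheta_0)(X_i-\bZ_i^\top\btheta_0)=c\cdot\frac1n\sum_i (X_i-\bZ_i^\top\btheta_0+U_i)(X_i-\bZ_i^\top\btheta_0)$, which converges in probability to $c\,\mbE[(X_1-\bZ_1^\top\btheta_0)^2]=c\,\sigma^2_{X,Z}$ because $U_i$ is mean-zero and independent of everything. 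The remaining cross terms involving $\hat\btheta-\btheta_0$ and $\hat\bgamma-\bgamma_0$ multiplied by the $O(n^{-1/2})$ alternative perturbation are of order $n^{-1/2}\cdot o_p(1)$ and hence negligible, using the consistency rates from Lemma~2 and the sparsity condition (C3).

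First I would state the local-alternative version of the moment decomposition: show that under $\beta_0=\beta^*+c/\sqrt n$, $T = T^{(0)} + c\,\sigma^2_{X,Z} + o_p(1)$, where $T^{(0)}$ is exactly the statistic that was analyzed under the null in Theorem~\ref{thm5.420220502}, evaluated at the true $\bgamma_0,\btheta_0$ corresponding to whichever of models (a)/(b) holds. Second, I would invoke Theorem~\ref{thm5.420220502}'s argument verbatim to get $T^{(0)}/\sigma_j \stackrel{d}{\to} N(0,1)$ with $j=1$ under model (a) and $j=2$ under model (b), where $\sigma_1^2,\sigma_2^2$ are the variances defined in (C7)/Condition~(C6). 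Third, I would show $\hat\sigma^2 \stackrel{p}{\to} \sigma_j^2$ under the local alternative as well — this requires re-checking that the extra $c/\sqrt n$ terms do not perturb the limit of $\hat\sigma^2$, which is immediate since they enter $\hat\sigma^2$ only through $O(n^{-1/2})$ and $O(n^{-1})$ corrections. Combining via Slutsky, $T_{DF}=T/\hat\sigma \stackrel{d}{\to} N(c\,\sigma_j^{-1}\sigma^2_{X,Z},1)$, which is the claimed conclusion for the two cases.

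The main technical obstacle is controlling the interaction between the penalized-estimation error and the local perturbation uniformly in the high-dimensional regime. Specifically, one must show that terms such as $\frac{c}{\sqrt n}\cdot\frac1{\sqrt n}\sum_i(\bZ_i^\top(\hat\btheta-\btheta_0))(X_i-\bZ_i^\top\btheta_0)$ and $\frac{c}{\sqrt n}\cdot\frac1{\sqrt n}\sum_i(W_i-\bZ_i^\top\btheta_0)\bZ_i^\top(\hat\bgamma-\bgamma_0)$ are $o_p(1)$; this needs the $\ell_1$-consistency bound $\|\hat\btheta-\btheta_0\|_1, \|\hat\bgamma-\bgamma_0\|_1 = O_p(s\sqrt{\log p/n})$ from Lemma~2 together with a uniform control $\max_j |\frac1n\sum_i Z_{i,j}(X_i-\bZ_i^\top\btheta_0)| = O_p(\sqrt{\log p/n})$ (sub-Gaussianity from (A1) plus a maximal inequality), yielding a bound of order $s\log p/n = o(1/\sqrt n)\cdot\sqrt n \to 0$ under (C3). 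A secondary, more delicate point is that under model (b) the relevant "true" $\bgamma_0$ is the best linear predictor of $V_i=Y_i-W_i\beta^*$, and when $\beta$ moves to $\beta_0$ this target moves; one must verify that the decomposition is carried out consistently with respect to a single fixed population target (taking $\bgamma_0$ to be the projection under $\beta^*$, with the $\frac{c}{\sqrt n}\btheta_0$ shift treated explicitly as above) so that no hidden $O(1)$ bias is introduced. Apart from these bookkeeping issues, the proof is a direct adaptation of Theorem~\ref{thm5.420220502} and the low-dimensional Theorem~\ref{thm5.220220503}, whose structure it mirrors exactly.
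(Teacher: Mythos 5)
Your proposal is correct and follows essentially the same route as the paper: expand $T$ under $\beta_0=\beta^*+c/\sqrt n$ into the null-hypothesis statistic plus a drift term that converges to $c\,\sigma^2_{X,Z}$ by the law of large numbers, kill the cross terms involving $\hat\btheta-\btheta_0$ and $\hat\bgamma-\bgamma_0$ via the $\ell_1$-consistency rates and a sub-Gaussian maximal inequality, check that the $O(n^{-1/2})$ perturbation does not move the limit of $\hat\sigma^2$, and conclude by Slutsky. The only cosmetic difference is that you absorb the shift into the population target $\bgamma_0+\frac{c}{\sqrt n}\btheta_0$, whereas the paper keeps $\bgamma_0$ fixed and carries the $X_i\Delta\beta$ term explicitly; both yield the same limit since $\mbE\bigl[(X_1-\bZ_1^\top\btheta_0)^2\bigr]=\mbE\bigl[X_1(X_1-\bZ_1^\top\btheta_0)\bigr]=\sigma^2_{X,Z}$.
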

Similar to the low-dimensional setting, when the models (a) and (b) both hold, $\sigma_1=\sigma_2$.

Theorem \ref{thm5.420220502} together with Theorem \ref{thm5.4420220502} show that $T_{DF}$ is asymptotic normal under both the null and the local alternative hypothesis. It has $DEF$ property that it is robust to misspecification of $Y$-model and $X$-model.

By Theorem \ref{thm5.420220502} and Theorem~\ref{thm5.4420220502}, we could invert test results to construct confidence regions. We first compute 
$T_{DF}(\beta_t)$ to be the test statistic under 
\begin{equation}
H_0: \beta_0=\beta_t \quad\text{v.s.}\quad H_1: \beta_0\neq\beta_t
\end{equation}
The $(1-\alpha)$ confidence region 
$C_\alpha$ is given by 
$$C_{\alpha}:=\left\{\beta_t \in \mathbb{R}:\left|T_{DF}(\beta_t)\right| \leq   \Phi^{-1}(1-\alpha/2)\right\},$$ where $\Phi(\cdot)$ is the cumulative distribution function of the of a standard normal distribution.  Corollary \ref{coro5.120220502} shows the  asymptotic
validity of the confidence region. During our simulation and real data analysis, we used grid search to determine the region of $C_{\alpha}$.
\begin{corollary}\label{coro5.120220502}
Under the same conditions with Theorem \ref{thm5.420220502},
the confidence set $C_{\alpha}$ satisfies that

$$P\left(\beta_0 \in C_{\alpha}\right)=P\left(\left|T_{DF}(\beta_0)\right| \leq \Phi^{-1}(1-\alpha/2)\right) \rightarrow 1-\alpha.$$

\end{corollary}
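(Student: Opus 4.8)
\textbf{Proof proposal for Corollary \ref{coro5.120220502}.}

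The plan is to deduce the coverage statement directly from the null-hypothesis asymptotic normality already established in Theorem \ref{thm5.420220502}, applied at the specific value $\beta_t = \beta_0$ (the true parameter). First I would observe that when $\beta_t$ is set equal to the true $\beta_0$, the hypothesis $H_0:\beta_0=\beta_t$ is \emph{true}, so the statistic $T_{DF}(\beta_0)$ is exactly the statistic $T_{DF}$ analyzed in Theorem \ref{thm5.420220502} under the null (with $\beta^\ast$ there playing the role of $\beta_0$ here). Since either model (a) or model (b) is assumed to hold, and conditions (A1)--(A2), (C1)--(C7) are in force, Theorem \ref{thm5.420220502} gives $T_{DF}(\beta_0)\stackrel{d}{\longrightarrow}N(0,1)$.

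Next I would translate weak convergence into the probability statement. By definition of $C_\alpha$,
$$
P\bigl(\beta_0\in C_\alpha\bigr)=P\bigl(|T_{DF}(\beta_0)|\le \Phi^{-1}(1-\alpha/2)\bigr).
$$
Because the limiting law is continuous (standard normal), convergence in distribution implies convergence of the CDF at every point, and in particular at the continuity points $\pm\Phi^{-1}(1-\alpha/2)$; hence
$$
P\bigl(|T_{DF}(\beta_0)|\le \Phi^{-1}(1-\alpha/2)\bigr)\longrightarrow \Phi\bigl(\Phi^{-1}(1-\alpha/2)\bigr)-\Phi\bigl(-\Phi^{-1}(1-\alpha/2)\bigr)=(1-\alpha/2)-(\alpha/2)=1-\alpha,
$$
using the symmetry $\Phi(-x)=1-\Phi(x)$. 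This is exactly the claimed conclusion.

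There is essentially no serious obstacle here: the corollary is a direct corollary of Theorem \ref{thm5.420220502} together with the elementary continuous-mapping/Portmanteau argument. The only point that warrants a sentence of care is that $\hat\sigma$ in the denominator of $T_{DF}(\beta_0)$ must still be the consistent variance estimator of $\sigma_1$ (or $\sigma_2$), which is guaranteed because the construction of $\hat\sigma^2$ in (\ref{score2}) is evaluated at $\beta_t=\beta_0$ and Theorem \ref{thm5.420220502} already incorporates $\hat\sigma^2\stackrel{p}{\longrightarrow}\sigma_1^2$ (resp. $\sigma_2^2$) in its proof; thus the Slutsky step needed to pass from $T$ to $T_{DF}$ at $\beta_t=\beta_0$ is covered. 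I would close by noting that the result holds regardless of which of (a) or (b) is the correctly specified model, so the confidence region inherits the DEF property.
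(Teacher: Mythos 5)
Your proposal is correct and matches the paper's (implicit) treatment: the paper offers no separate proof of Corollary \ref{coro5.120220502}, regarding it as an immediate consequence of Theorem \ref{thm5.420220502} applied at $\beta_t=\beta_0$, exactly as you argue via convergence in distribution to the continuous standard normal law. Your added remark that $\hat\sigma$ evaluated at $\beta_t=\beta_0$ remains consistent (so the Slutsky step goes through) is a correct and worthwhile point of care.
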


\section{Sparsity-Adaptive Procedure}\label{sec5.2.43/30}
 
In high dimensional settings, the statistic proposed above is valid when Condition (C5) holds, which requires that the coefficients $\bgamma_0$ and $\btheta_0$ are both sparse enough. However, this assumption might not hold in practice. For instance, biologists have found examples from genome-wide gene expression profiling that all genes are believed to affect a common disease marker, thereby violating the sparsity assumption. Thus, we need to consider a methodology that can work when the sparsity assumption is violated, specifically when either $\bgamma_0$ or $\btheta_0$ is dense. \citet{zhu2017breaking}, \citet{zhu2018linear}, and \citet{zhu2018significance} have proposed methods for statistical inference in non-sparse high dimensional linear models. \citet{bradic2020fixed} have generalized these methods to linear mixed models. Instead of using the penalty-based estimator directly for inference, they modified the Dantzig selector estimator proposed by \citet{candes2007dantzig} and made it adaptive for inference in high dimensional models. We leverage their estimation procedure and generalize it to models with measurement error.

In this section, we consider the scenario where the sparsity assumption of either $\bgamma_0$ or $\btheta_0$ is violated, but we still assume that the linear model is valid. Specifically, we want the inference procedure to be valid when the true model is one of the following:
\begin{itemize}
\item[(c)]$\bY=\bX\beta_0+\bZ\bgamma_0+\bepsilon, \bW=\bX+\bU,  \bX=\bZ\btheta_0+\beta$,  $\|\btheta_0\|_0=o(\sqrt{\dfrac{n}{\log p\log n}})$,$\|\btheta_0\|_2\leq \kappa$ and $\|\bgamma_0\|_2\leq \kappa$ for some constant $\kappa$.
\item[(d)]$\bY=\bX\beta_0+\bZ\bgamma_0+\bepsilon, \bW=\bX+\bU,  \bX=\bZ\btheta_0+\beta$,  $\|\bgamma_0\|_0=o(\sqrt{\dfrac{n}{\log p\log n}})$,$\|\bgamma_0\|_2\leq \kappa$ and $\|\btheta_0\|_2\leq \kappa$ for some constant $\kappa$.
\end{itemize}

We still consider using statistics (\ref{5.1120220501}), but the difference here is that the penalty-based estimator cannot work well due to the loss of the sparsity assumption. Instead, we require a sparsity-adaptive estimation procedure. Recall that $\bV$ is the pseudo-response, given by $\bV:=\bY-\bW \beta_0$. Following a similar strategy used by \citet{bradic2020fixed}, we define an estimator that is adaptive to the level of sparsity based on the solution path of the following linear program. Let 
\begin{eqnarray}
\widetilde{\bgamma} & := & \underset{\bgamma \in \mathbb{R}^{p}}{\arg \min }\|\bgamma\|_{1}\label{5.1520220502} \\
& \text { s.t. } & \left\|n^{-1} \mathbf{Z}^{\top}(\mathbf{V}-\mathbf{Z} {\bgamma})\right\|_{\infty} \leq \eta_{\bgamma} , \label{5.1820220430}\\
& & \|\mathbf{V}-\mathbf{Z} {\bgamma}\|_{\infty} \leq \mu_{\bgamma}, \label{5.1920220430} \\
& & n^{-1} \mathbf{V}^{\top}(\mathbf{V}-\mathbf{Z} {\bgamma}) \geq \rho_{\bgamma},\label{5.2020220430}
\end{eqnarray}
with $\eta_{\bgamma}\asymp(\log n) \sqrt{n^{-1}\log p},$ $\mu_{\bgamma}\asymp\sqrt{n}$ and $0<\rho_{\bgamma}<\sigma^2_\varepsilon+\sigma_U^2\beta^{*2}$ as tunning parameters.
Compared with the Dantzig selector, there are two additional constraints (\ref{5.1920220430}) and (\ref{5.2020220430}). The constraint (\ref{5.1820220430}) guarantees the gradient of the loss is close to $0$. When $\bgamma_0$ is sparse, the constraint (\ref{5.1820220430}) is sufficient for the consistency of $\hat{\bgamma}$. $\hat{\bgamma}$ will be consistent of $\bgamma_0$ and share properties with the Dantzig selector. When $\bgamma_0$ is not sparse, the estimator will be no longer consistent. Under this case, additional constraints (\ref{5.1920220430}) and (\ref{5.2020220430}) can introduce the stability and make the final estimator still valid for inference.

Estimator of $\btheta_0$ is constructed in a similar way. For $\eta_{\btheta}>0$, let
\begin{eqnarray}
\widetilde{\btheta} &:= & \underset{{\btheta} \in \mathbb{R}^{p}}{\arg \min }\|{\btheta}\|_{1} \label{5.2120220502}\\
& \text { s.t.}& \left\|n^{-1} \mathbf{Z}^{\top}(\mathbf{W}-\mathbf{Z} {\btheta})\right\|_{\infty} \leq \eta_{\btheta}, \\
&& \|\mathbf{W}-\mathbf{Z} {\btheta}\|_{\infty} \leq\mu_{\btheta}, \\
&& n^{-1} \mathbf{W}^{\top}(\mathbf{W}-\mathbf{Z} {\btheta}) \geq \rho_{\btheta},\label{5.2420220502}
\end{eqnarray}
with $\eta_{\btheta}\asymp(\log n) \sqrt{n^{-1}\log p},$ $\mu_{\btheta}\asymp\sqrt{n}$ and $0<\rho_{\btheta}<\sigma^2_\eta+\sigma_U^2\beta^{*2}$.

Both optimization problems of solving $\widetilde{\bgamma}$ and $\widetilde\btheta$ can be formulated in the form of linear programming problems and the optimal can be obtained by the simplex method efficiently. We refer to Section 2.3 in \citet{zhu2018significance} as a reference for the optimization. 

The forms of $T$ and $T_{DF}$ remain unchanged. The following regularity conditions are required before we move on the the theoretical properties.
\begin{enumerate}[resume*=myexample2]
\item\label{(C10)20220501} $||\varepsilon_i||_{\psi_{2}}, ||U_i||_{\psi_{2}}, ||\eta_i||_{\psi_{2}}\left\|Z_{i,j}\right\|_{\psi_{2}}, \left\|X_{i}\right\|_{\psi_{2}}$are uniformly bounded by $K_1$ for $i=1,\cdots, n$ and $j=1,\cdots, p$. 
\item\label{(C12)20220501}  There exists constants $\bar{C},\underline{C}>0$ such that $\underline{C}\leq\lambda_{\min}(\bSigma)\leq\lambda_{\max}(\bSigma)\leq\bar{C}$ uniformly.
\item \label{C1220220502}  For $s=s_{\bgamma}$ when model (c) holds or $s=s_{\btheta}$  when model (d) holds, \\
$P(\min \limits_{J_{0} \subseteq\{1, \ldots, p\}\atop \left|J_{0}\right| \leq s}\min \limits_{\delta \neq 0\atop \|{\delta}_{J_{0}^{c}}\|_{1} \leq\|{\delta}_{J_{0}}\|_{1}}\dfrac{\left\| \bZ {\delta}\right\|_{2}}{\sqrt{n}\left\|{\delta}_{J_{0}}\right\|_{2}} \geq \kappa)\rightarrow 1$. 
\item \label{C1320220502}There exists a constant $\eta>0$ such that $\hat\sigma\stackrel{p}{\longrightarrow} \eta$.
\end{enumerate}
\emph{Remark.} Condition (C10) is also named as restricted eigenvalue condition. When the covaraites are sub-Gaussian distributed, \citet{rudelson2012reconstruction} show that the restricted eigenvalue condition holds with probability tending to 1 when sample size $n$ goes to infinity.

\begin{theorem}\label{them5.520220501}
 Suppose that (C8) - (C11) hold. Then under the null hypothesis, it holds that $T_{DF}\stackrel{d}{\longrightarrow} N(0,1)$ if  model (c) or (d) holds.
\end{theorem}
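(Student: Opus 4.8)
The plan is to imitate the argument behind Theorem~\ref{thm5.420220502}, but to exploit that, whichever of $\bgamma_0$ and $\btheta_0$ is allowed to be dense, the associated linear-programming residual is merely \emph{stable}---controlled by the constraints of the program---rather than close to the noise. I outline the argument under model (c), where $\btheta_0$ is sparse and $\bgamma_0$ may be dense; model (d) is identical after interchanging the roles of $(\btheta_0,\bW,\bfeta)$ and $(\bgamma_0,\bV,\bepsilon-\bU\beta^*)$. Throughout, $\widetilde{\bgamma},\widetilde{\btheta}$ are the solutions of (\ref{5.1520220502}) and (\ref{5.2120220502}), $\bdelta_{\bgamma}=\widetilde{\bgamma}-\bgamma_0$, $\bdelta_{\btheta}=\widetilde{\btheta}-\btheta_0$, and $r_i=V_i-\bZ_i^\top\widetilde{\bgamma}$.

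First I would reduce $T$ to an oracle-type object. Since both structural equations in (c) are linear, under $H_0$ one has $W_i-\bZ_i^\top\btheta_0=\eta_i+U_i$ and $V_i-\bZ_i^\top\bgamma_0=\varepsilon_i-U_i\beta^*$, so expanding $T$ in (\ref{score}) gives
\[
T=\underbrace{\frac{1}{\sqrt n}\sum_{i=1}^n\bigl\{(\eta_i+U_i)\,r_i+\sigma_U^2\beta^*\bigr\}}_{=:T^*}\;-\;\underbrace{\sqrt n\,\bdelta_{\btheta}^\top\Bigl(n^{-1}\bZ^\top(\bV-\bZ\widetilde{\bgamma})\Bigr)}_{=:R_2}.
\]
For $R_2$: using the restricted-eigenvalue condition (C10), the sub-Gaussian concentration (C8), and feasibility of $\btheta_0$ in its own program, a Dantzig-type bound gives $\|\bdelta_{\btheta}\|_1=o_p(1)$ at the rate governed by $s_{\btheta}$; combined with the constraint $\|n^{-1}\bZ^\top(\bV-\bZ\widetilde{\bgamma})\|_\infty\le\eta_{\bgamma}$ and Hölder's inequality, $|R_2|\le\sqrt n\,\|\bdelta_{\btheta}\|_1\,\eta_{\bgamma}=o_p(1)$ by the sparsity condition in (c). For the dense side I would check that $\bgamma_0$ is itself feasible in (\ref{5.1520220502}) with probability tending to one---$\|n^{-1}\bZ^\top(\bepsilon-\bU\beta^*)\|_\infty=O_p(\sqrt{n^{-1}\log p})\le\eta_{\bgamma}$, $\|\bepsilon-\bU\beta^*\|_\infty=O_p(\sqrt{\log n})\le\mu_{\bgamma}$, and $n^{-1}\bV^\top(\bepsilon-\bU\beta^*)\xrightarrow{p}\sigma_\varepsilon^2+\sigma_U^2\beta^{*2}>\rho_{\bgamma}$---so that $\widetilde{\bgamma}$ exists, $\|\widetilde{\bgamma}\|_1\le\|\bgamma_0\|_1$, and $r$ inherits $\|n^{-1}\bZ^\top r\|_\infty\le\eta_{\bgamma}$, $\|r\|_\infty\le\mu_{\bgamma}$, $n^{-1}\bV^\top r\ge\rho_{\bgamma}$.

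The core of the proof is the asymptotic normality of $T^*$. Writing $T^*=T_1-\sqrt n\,\bdelta_{\bgamma}^\top\bigl(n^{-1}\bZ^\top(\bfeta+\bU)\bigr)$ with $T_1=n^{-1/2}\sum_i\{(\eta_i+U_i)(\varepsilon_i-U_i\beta^*)+\sigma_U^2\beta^*\}$---whose centring is exactly the $\sigma_U^2\beta^*$ term since $\mbE[(\eta_i+U_i)(\varepsilon_i-U_i\beta^*)]=-\sigma_U^2\beta^*$---Lyapunov's CLT (finite fourth moments by (C8)) handles $T_1$. For the cross term one cannot invoke $\|\bdelta_{\bgamma}\|_1$, which need not be small; instead, following \citet{zhu2018significance}, I would use the Dantzig constraint to get $\|n^{-1}\bZ^\top\bZ\bdelta_{\bgamma}\|_\infty\le 2\eta_{\bgamma}$ (hence control $n^{-1}\|\bZ\bdelta_{\bgamma}\|_2^2$), and the extra constraints (\ref{5.1920220430})--(\ref{5.2020220430}) to keep $n^{-1}\|r\|_2^2$ bounded away from $0$ and $\infty$, so that $\bZ\bdelta_{\bgamma}=(\bepsilon-\bU\beta^*)-r$ is stable. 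Decomposing $\bfeta+\bU$ into $\bfeta$, independent of $(\bZ,\bepsilon,\bU)$ and therefore of $r$, and $\bU$, which enters the pseudo-response, a conditioning/leave-one-out argument then yields $T^*\xrightarrow{d}N(0,\sigma_\star^2)$ for a finite $\sigma_\star^2$ (coinciding with $\sigma_1^2$ when $n^{-1}\|\bZ\bdelta_{\bgamma}\|_2^2\xrightarrow{p}0$). Combining the displays above, $T\xrightarrow{d}N(0,\sigma_\star^2)$. A parallel expansion of $\widehat\sigma^2$ in (\ref{score2}) shows its $R_2$-type and cross-type contributions are negligible too, so $\widehat\sigma^2\xrightarrow{p}\sigma_\star^2$; by (C11) this limit equals the constant $\eta^2>0$, and Slutsky's theorem gives $T_{DF}=T/\widehat\sigma\xrightarrow{d}N(0,1)$.

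The main obstacle is the cross term $\sqrt n\,\bdelta_{\bgamma}^\top(n^{-1}\bZ^\top(\bfeta+\bU))$. Because $\bgamma_0$ may be fully dense, $\bdelta_{\bgamma}$ is not small in any norm, so the dense-side estimation error cannot be treated as a perturbation; its \emph{contribution} must be tamed purely through the optimality and stability constraints (\ref{5.1820220430})--(\ref{5.2020220430}) of the linear program, exactly as in \citet{zhu2018significance}. The measurement-error structure adds a genuine new difficulty: the pseudo-response $\bV=\bY-\bW\beta^*$---hence $\widetilde{\bgamma}$ and $\bdelta_{\bgamma}$---and the ``noise direction'' $\bfeta+\bU$ share the common component $\bU$, so one cannot simply condition on the data generating $\widetilde{\bgamma}$ and treat $\bfeta+\bU$ as fresh; disentangling the dependence of $\widetilde{\bgamma}$ on $\bU$ through $\bV$ from that of $\bfeta+\bU$ (e.g.\ by separating the part of $\bU$ not determined by $\bepsilon-\beta^*\bU$ and handling the remainder via the law of large numbers) is the delicate step.
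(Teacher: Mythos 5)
Your reduction of $T$ coincides with the paper's: you peel off the sparse-side bias term $R_2$ with the same H\"older bound $\sqrt n\,\|\bdelta_{\btheta}\|_1\,\eta_{\bgamma}$, you establish feasibility of $\bgamma_0$ exactly as in the paper's Lemma \ref{lemma5.420220502} (including the check that $n^{-1}\bV^\top(\bV-\bZ\bgamma_0)$ concentrates above $\rho_{\bgamma}$), and you close with condition (C11) and self-normalization. The divergence, and the gap, is at the central step. The paper never splits off the cross term $\sqrt n\,\bdelta_{\bgamma}^\top\bigl(n^{-1}\bZ^\top(\bfeta+\bU)\bigr)$: it keeps the residual $\hat\varepsilon_i=V_i-\bZ_i^\top\hat{\bgamma}$ inside the summand, sets $\xi_i=n^{-1/2}\{(\eta_i+U_i)\hat\varepsilon_i+\sigma_U^2\beta^*\}$, and proves $\sum_i\xi_i/\sqrt{\sum_i\xi_i^2}\rightarrow N(0,1)$ by the martingale CLT (Theorem 3.4 of Hall and Heyde) with respect to the filtration $\mathcal{F}_{n,i}=\sigma(\bX,\bepsilon,\eta_1,\dots,\eta_i,U_1,\dots,U_i)$, only having to verify $\mbE\max_i\xi_i^2=o(1)$ and $\max_i|\xi_i|=o_p(1)$ via the constraint $\|\bV-\bZ\hat{\bgamma}\|_\infty\le\mu_{\bgamma}$. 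That formulation is precisely the device that lets the dense-side estimation error contribute to the (unidentified) limiting variance without ever being shown small, and it is what handles the entanglement of $\bU$ with $\hat{\bgamma}$ through the conditional-mean computation. Your proposal instead asserts that ``a conditioning/leave-one-out argument then yields $T^*\rightarrow N(0,\sigma_\star^2)$'' and then, in your closing paragraph, concedes that disentangling the dependence of $\widetilde{\bgamma}$ on $\bU$ from the noise direction $\bfeta+\bU$ is ``the delicate step.'' That step \emph{is} the theorem: you have located the difficulty correctly but not supplied the argument that resolves it.

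A concrete reason your sketched route would not close as written: from $\|n^{-1}\bZ^\top\bZ\bdelta_{\bgamma}\|_\infty\le 2\eta_{\bgamma}$ and H\"older you get $n^{-1}\|\bZ\bdelta_{\bgamma}\|_2^2\le 2\eta_{\bgamma}\|\bdelta_{\bgamma}\|_1\le 4\eta_{\bgamma}\|\bgamma_0\|_1$, but for a fully dense $\bgamma_0$ with only $\|\bgamma_0\|_2\le\kappa$ one has $\|\bgamma_0\|_1\lesssim\sqrt p\,\kappa$, so this bound is of order $\kappa(\log n)\sqrt{p\log p/n}$ and diverges when $p\gg n$. Likewise the $\ell_\infty$ constraint $\|r\|_\infty\le\mu_{\bgamma}\asymp\sqrt n$ does not keep $n^{-1}\|r\|_2^2$ bounded above; in this program the $\ell_\infty$ and inner-product constraints serve only to verify the Lindeberg-type conditions and the non-degeneracy of the self-normalizer, not to make $\bZ\bdelta_{\bgamma}$ a vanishing perturbation. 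So the cross term cannot be controlled in norm at all; it must be absorbed into the limit distribution, which is what the paper's martingale/self-normalization argument (together with assuming, via (C11), rather than deriving, the convergence of $\hat\sigma$) accomplishes.
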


Theorem \ref{them5.520220501} establishes the asymptotic normality of our test statistics under the null hypothesis. The statistics only require one of $\btheta_0$ and $\bgamma_0$ to be sparse. The theoretical power function of our statistics will be in the same form as in Theorem \ref{thm5.420220502} when both $\btheta_0$ and $\bgamma_0$ are sparse, but it is hard to derive when the parameters are dense. Therefore, we leave it as future work. However, numerical studies indicate that the statistics can have power under the alternative hypothesis, as illustrated in Example 4 in Section \ref{sec5.33/30}.

\section{Simulation Study}\label{sec5.33/30}
We conduct simulation studies under different settings to investigate the finite sample performance of our test. In all simulation studies, the regression errors $\varepsilon_i$ and $\eta_i$ are generated from the standard normal distribution $N(0,1)$, and the measurement error $U_i$ is generated from $N(0,\sigma^2_U)$. The covariate vector $\bZ_i=(Z_{i,1},Z_{i,2},\cdots,Z_{i,p})^\top$ is generated from a multivariate Gaussian distribution $N_p(0,\bm{\bSigma})$, where $\bm{\bSigma}$ is the autoregressive matrix with its entry $\bSigma_{j,k}=\rho^{|j-k|}$. We set the nominal level for rejection to be 0.05.

\subsection{Performance under Low Dimensional Setting}
\textbf{Example 1.} We first consider the low dimensional setting with $p=4$ and sample size $n=100$. We set $\rho=0.5$ and $\sigma_U=1$. Simulation 1 represents the case when the $Y$-model is misspecified, and Simulation 2 represents the case when the $X$-model is misspecified.

\textit{Simulation 1.} In Simulation 1, let $X$-model  be misspecified. We generate $X_i=Z_{i,1}^2+\eta_i$, $X_i=\text{sin}(Z_{i,1})+\text{sin}(Z_{i,2})+\eta_i$ and $X_i=Z_{i,1}Z_{i,2}+\eta_i$ in Simulation $1a$, $1b$ and $1c$ respectively. Let $W_i=X_i+U_i$ and    $Y_i=X_i\beta+Z_{i,1}+0.8Z_{i,2}+\varepsilon_i$. 

\textit{Simulation 2.} In Simulation 2, let $Y$-model  to be misspecified while $X$-model is correctly specified. We generate $X_i=Z_{i,1}+0.8Z_{i,2}+\eta_i$, and $W_i=X_i+U_i$. Here we consider three different types of regression functions of the response variable.  Let $Y_i$ to be generated from $Y_i=X_i\beta+Z_{i,1}^2+\varepsilon_i$, $Y_i=X_i\beta+\text{sin}(Z_{i,1})+\text{sin}(Z_{i,2})+\varepsilon_i$ and $Y_i=X_i\beta+Z_{i,1}Z_{i,2}+\varepsilon_i$ in Simulation $2a$, $2b$ and $2c$ respectively.

Consider the following null and alternative hypothesis:
\begin{equation}
H_{0}: \beta=0\text{\ \ versus\ \ } H_{1}:\beta \neq 0.
\end{equation}
Let the true value of $\beta$ vary from $-2$ to $2$. It corresponds to null hypothesis when $\beta= 0$ while it corresponds to alternative hypothesis when $\beta\neq0$.

The simulation results are presented in Figure \ref{sim12}.
As expected, the newly proposed statistics control the size well and is powerful under different kinds of model misspecification.

\begin{figure}[htbp]
\centering    
\subfigure 
{
	\begin{minipage}[b]{0.45\textwidth}
	\centering          
	\includegraphics[width=5.8cm]{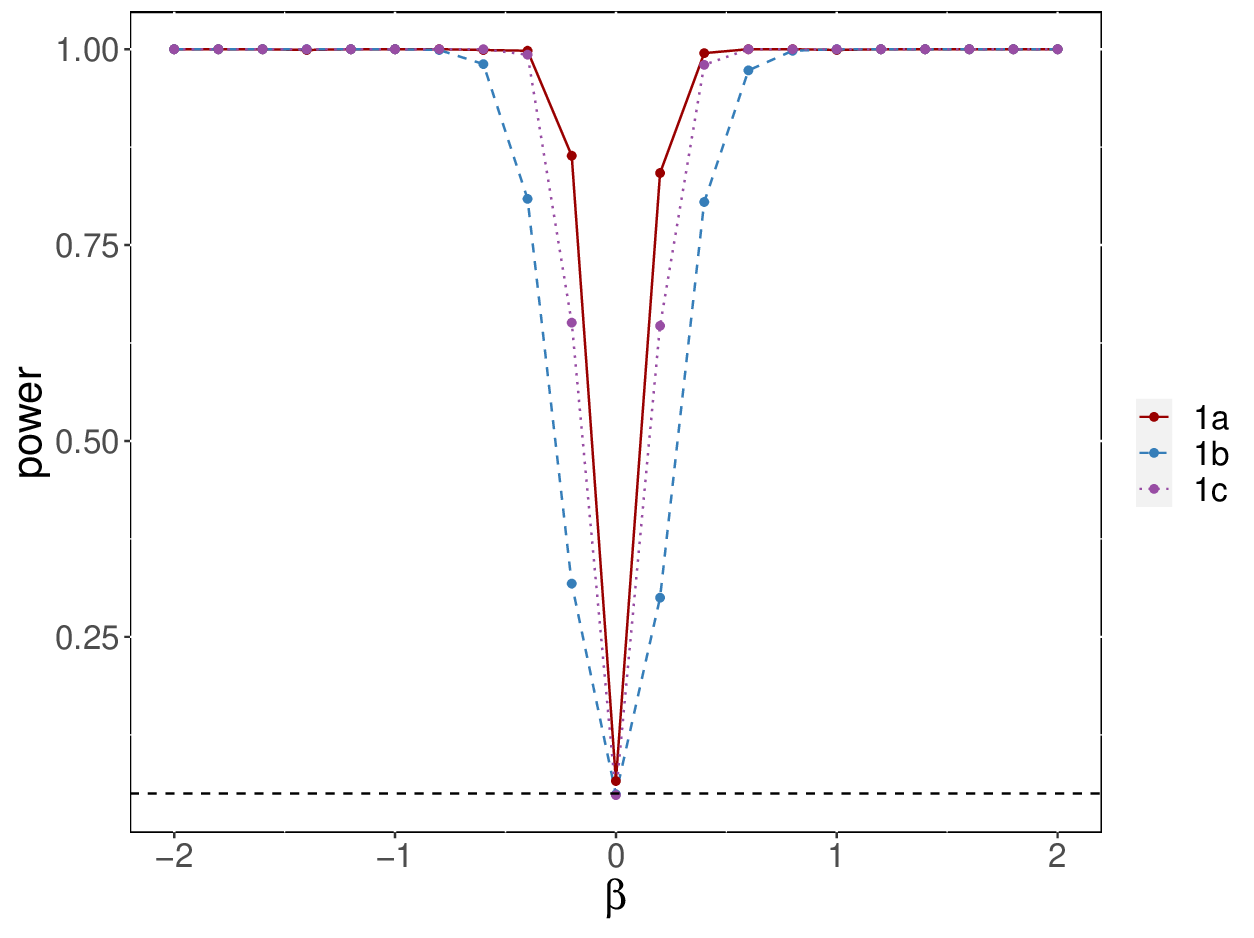}   
	\end{minipage}
}
\subfigure 
{
	\begin{minipage}[b]{0.45\textwidth}
	\centering      
	\includegraphics[width=6.3cm]{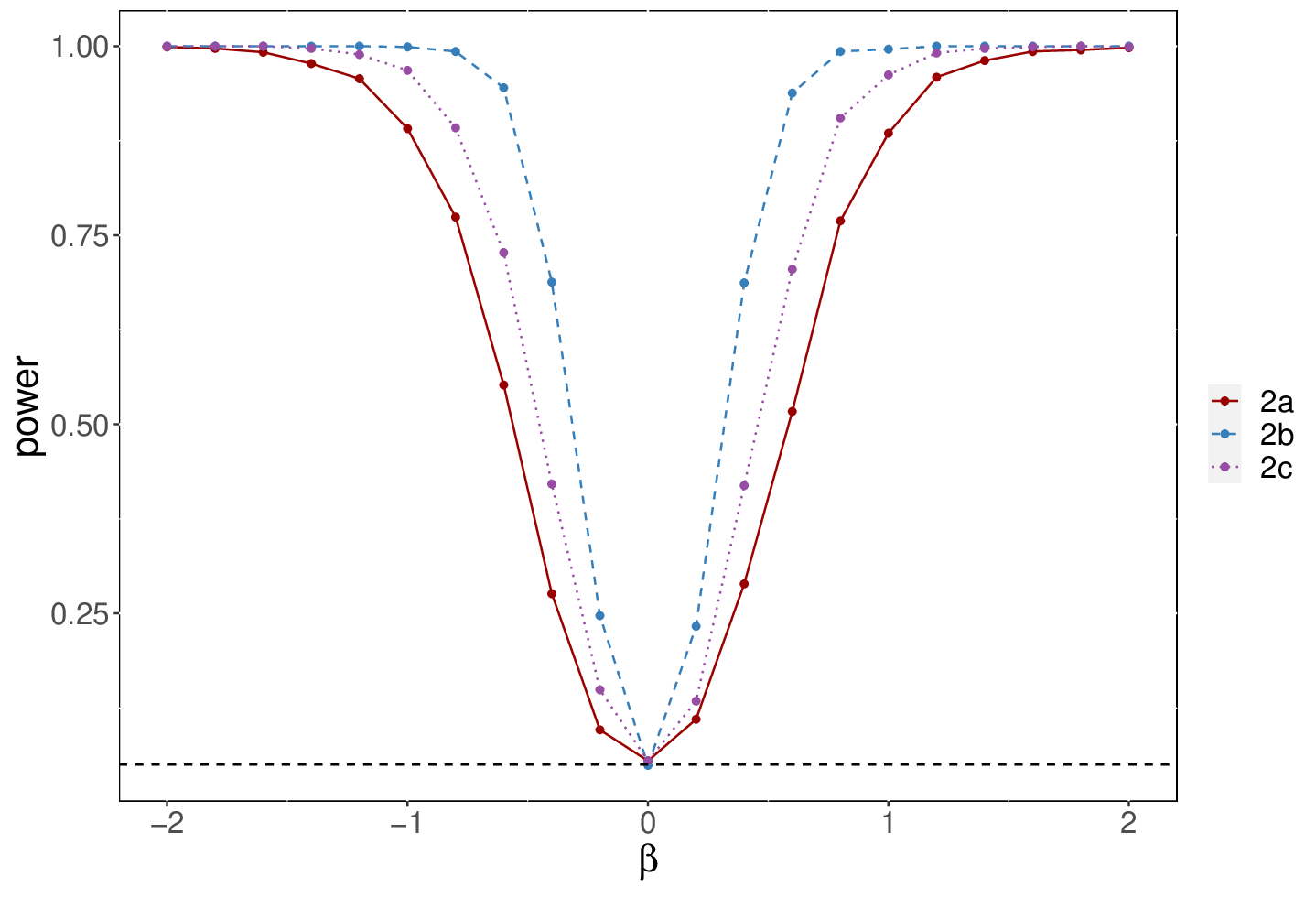} 
	\end{minipage}
}
\caption{Left panel and right panel are the empirical sizes and powers of $T_{DF}$ at level $\alpha = 0.05$ over 1,000 replications
in Simulation 1 and 2 respectively. The horizontal dotted line represents significance level 0.05}\label{sim12}
\end{figure}

\subsection{Performance under High Dimensional Setting}
We then consider high dimensional setting in the following examples. In the following examples, sample size $n$ is set to $200$. In Example 1 and Example 2, we show the performance of $T_{DF}$ under model misspecification with the parameters still being sparse. In these examples, we use the Lasso estimator. We use $T_{DFLA}$ to represent our statistics with the lasso estimator. In Example 3, the performance of $T_{DF}$ using the sparsity-adaptive estimator is shown, under models with dense parameters.  Use $T_{DFDE}$ to represent the statistics with the sparsity-adaptive estimator.
We compare the performance of the newly proposed test statistics with the corrected decorrelated score test (CDST) statistics proposed by \citet{li2021inference}. 

\textbf{Example 2.}
In this example, we consider the case when both $X$-model and $Y$-model are correctly specified.  We take $p=200$, $\rho=0.25$ and $\sigma_U=0.1$. We generate $X_i=1.2Z_{i,1} + 0.8Z_{i,2}+ \eta_i$, $Y_i= X_i\beta + Z_{i,3} + Z_{i,4} + \varepsilon_i$ and $W_i=X_i+U_i$. 
The null and alternative hypothesis of this example is:
\begin{equation}\label{sim1}
H_{0}: \beta=1\text{\ \ versus\ \ } H_{1}:\beta \neq 1.
\end{equation}
Simulation results are shown in the left penal in Figure \ref{fig:1}. CDST performs similar to our method under this setting. 

CDST cannot control the size when predictors are highly correlated with each other. We further adjust the previous $X$-model to $X_i=1.2Z_{i,1} + 0.8Z_{i,3}+ \eta_i$, and remain the  $Y$-model as the same. where $X_i$ and active predictors in $\bZ_i$ are correlated. Results are illustrated in the right penal in Figure \ref{fig:1}.
We can see from the plot that the type-I error of CDST is 0.13 under this setting, which indicates that CDST cannot control the size well. Further, CDST needs an estimation of the fourth-order moment of the measurement error so the computational burden is large and the condition on $U$ is more strict. The newly proposed method only requires and estimates the second-order moment of $U$, which is pretty common and moderate when analyzing the measurement error model.

\begin{figure}[htbp]
\centering    
\subfigure 
{
	\begin{minipage}[b]{0.47\textwidth}
	\centering          
	\includegraphics[width=5.8cm]{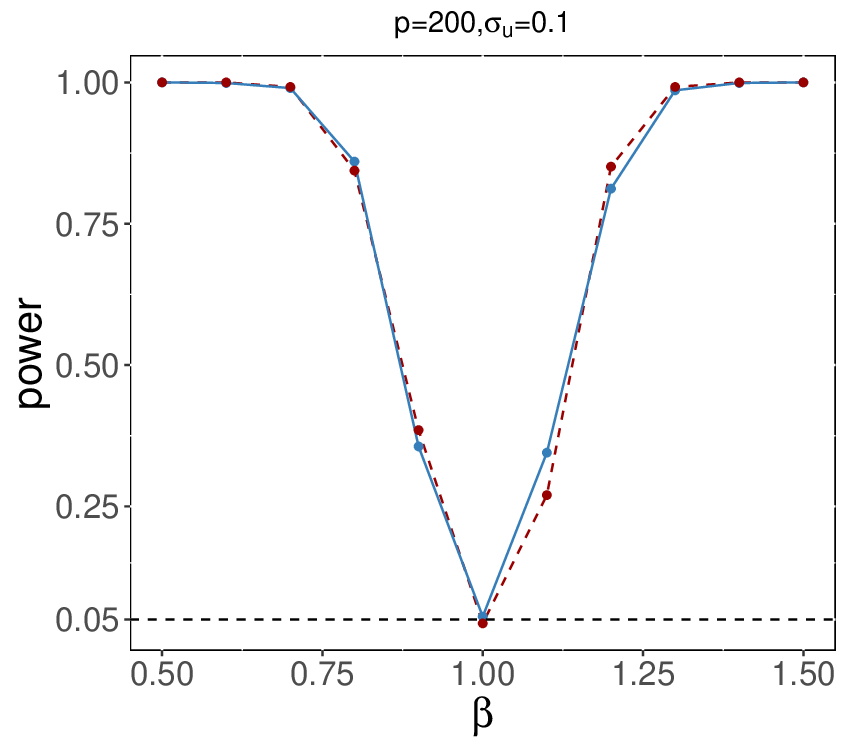}   
	\end{minipage}
}
\subfigure 
{
	\begin{minipage}[b]{0.47\textwidth}
	\centering      
	\includegraphics[width=6.9cm]{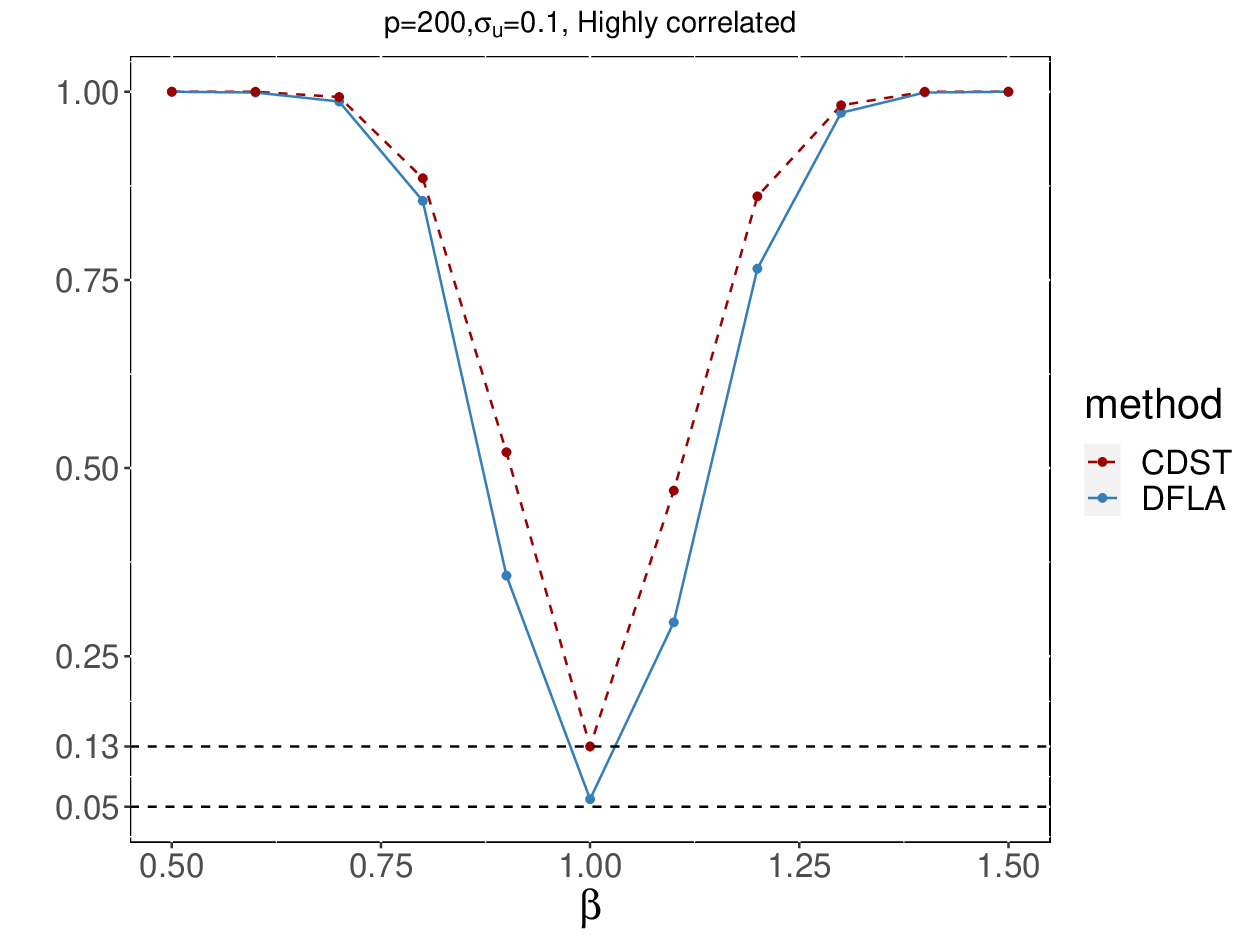} 
	\end{minipage}
}
\caption{Power comparison plot. Left: when $X$ and active $Z$ are not correlated. Right: when $X$ and active $Z$ are highly correlated}\label{fig:1} 

\end{figure}

\textbf{Example 3.}
We further investigate the performance of our test statistics when the $Y$-model or $X$-model is misspecified under a high dimensional setting. We consider three different models.

\textit{Model 1.}
In this model we let the $X$-model  be misspecified. Let $\widetilde{Z}_{i,j}=\dfrac{e^{Z_{i,j}}-1}{e^{Z_{i,j}}+1}$. We generate $X_i=2\widetilde{Z}_{i,1} + \widetilde{Z}_{i,2}+ \eta_i$, $Y_i= X_i\beta + Z_{i,3} + Z_{i,4} + f(X_i)\cdot\varepsilon_i$ and $W_i=X_i+U_i$. 
We let $p$ to be $200$ and $400$. $\rho=0.25$. Different values of $\sigma_U$ are also considered, where $\sigma_U= 0.1, 0.2,$ and $0.3$ respectively. With larger dimension or larger measurement error, inference becomes progressively more difficult. We consider three scenarios for $f(X)$ that (i): $f(X_i)=1$, (ii): $f(X_i)=X_i$ and (iii): $f(X_i)=1 + 0.5\sin(4\pi X_i)$, where error is heteroscedastic under the second and third scenarios. Simulation results are shown in Table \ref{model1sum1}.   The newly proposed statistics can control the size well under different settings while size of CDST deviates largely from 0.05 when models are misspecified.  There is little power loss when the dimension  and measurement error become larger.

\begin{table}[h!]
\caption{Empirical sizes and powers for Model 1}\label{model1sum1}
\centering
\begin{tabular}{ccccccccccc}
\toprule
\multirow{3}{*}{Model}           & \multirow{3}{*}{$p$}                   & \multirow{3}{*}{$\sigma_U$}& \multicolumn{7}{c}{$\beta$} \\ && & \multicolumn{2}{c}{$1$(size)}& $1.2$ & $1.4$ & $1.6$  & $0.8$ & $0.6$ & $0.4$ \\\cline{4-11} &&&$T_{DFLA}$&CDST&\multicolumn{6}{c}{------------------------$T_{DFLA}$-------------------------}\\ \hline
\multirow{6}{*}{$1(i)$} & \multirow{3}{*}{200} & 0.1& 0.057&0.440&0.793&1&1&0.859&0.999&1 \\ 
                   &                      & 0.2    &       0.063&0.473&0.75&0.999&1&0.855&0.999&1  \\  
                   &                      & 0.3    &       0.065&0.434&0.654&0.991&1&0.843&0.999&1  \\ \cline{2-11} 
                   & \multirow{3}{*}{400} & 0.1    &       0.044&0.432&0.795&1&1&0.843&1&1   \\  
                   &                      & 0.2    &     0.053&0.430&0.743&0.999&1&0.85&1&1  \\ 
                   &                      & 0.3    &      0.065&0.424&0.64&0.992&0.999&0.845&1&1   \\ \toprule

\multirow{6}{*}{$1(ii)$} & \multirow{3}{*}{200} & 0.1& 0.057&0.535&0.287&0.775&0.983&0.325&0.819&0.975 \\ 
                   &                      & 0.2    &       0.053&0.541&0.27&0.744&0.975&0.341&0.815&0.976   \\  
                   &                      & 0.3    &       0.053&0.503&0.239&0.699&0.959&0.35&0.813&0.977  \\ \cline{2-11} 
                   & \multirow{3}{*}{400} & 0.1    &       0.038&0.551&0.282&0.79&0.984&0.325&0.831&0.987   \\  
                   &                      & 0.2    &     0.035&0.502&0.254&0.761&0.979&0.344&0.831&0.988  \\ 
                   &                      & 0.3    &      0.041&0.549&0.205&0.699&0.954&0.351&0.834&0.989  \\ \toprule
\multirow{6}{*}{$1(iii)$} & \multirow{3}{*}{200} & 0.1& 0.06&0.405&0.762&0.997&1&0.821&0.999&1 \\ 
                   &                      & 0.2    &       0.059&0.445&0.721&0.995&1&0.814&1&1   \\  
                   &                      & 0.3    &       0.069&0.429&0.629&0.983&0.999&0.806&0.999&1  \\ \cline{2-11} 
                   & \multirow{3}{*}{400} & 0.1    &       0.044&0.420&0.766&0.999&1&0.816&1&1   \\  
                   &                      & 0.2    &     0.049&0.427&0.699&0.996&1&0.823&1&1 \\ 
                   &                      & 0.3    &      0.066&0.435&0.595&0.985&1&0.825&0.999&1   \\ \toprule

\end{tabular}
\end{table}

In the next part of simulation, We consider the case when the $X$-model is correctly specified to be sparse linear model while $Y$-model is misspecified.  We consider two models. 

\emph{Model 2}. In this model We set $(X_i,\bZ_i^\top)^\top\sim N(0,\bSigma),$ with $\bSigma_{i,j}=0.75^{|i-j|}$, which follows an AR(1) processes. This is the case when the sparsity level of $X$-model is 1. We consider two different models for $Y$:

\emph{Model 2a}. $Y_i = X_i \beta + \widetilde{Z}_i^\top\bgamma+ \varepsilon_i, \widetilde{Z}_{i,j}=\dfrac{ e^{Z_{i,j}}-1}{e^{Z_{i,j}}+1}.$ $\bgamma_1,\cdots,\bgamma_{10}$ are from Uniform(0,1). 

\emph{Model 2b}. $Y_i= X_i\beta + Z_{i,3}^2 + Z_{i,4}^2 + \varepsilon_i$.

Compared with model $2a$, model $2b$ is more challenging because the $Y$-model in model $2b$ is  highly nonlinear. Results are shown in Table \ref{model1sum} and  we can find that the newly proposed test statistics can still control the size well even under these more challenging scenarios. It is not surprising that power is not as large as before, especially when the measurement error becomes larger. Nevertheless, the power is still in a reasonable magnitude and goes to 1 when true value of $\beta$ deviates from the null setting.

\begin{table}[h!]
\caption{Empirical sizes and powers for Model 2}\label{model1sum}
\centering
\setlength{\tabcolsep}{0.8mm}
\begin{tabular}{ccccccccccccc}
\toprule
\multirow{3}{*}{Model}           & \multirow{3}{*}{$p$}                   & \multirow{3}{*}{$\sigma_U$}& \multicolumn{9}{c}{$\beta$} \\&& & \multicolumn{2}{c}{$1$(size)} & $1.2$ & $1.4$ & $1.6$ & $1.8$ & $0.8$ & $0.6$ & $0.4$ & $0.2$\\\cline{4-13}
&&&$T_{DFLA}$&CDST&\multicolumn{8}{c}{---------------------------$T_{DFLA}$----------------------------}\\\hline
\multirow{6}{*}{$2a$} & \multirow{3}{*}{200} & 0.1& 0.051&0.411&0.313&0.896&0.999&1&0.498&0.95&0.999&1 \\ 
                   &                      & 0.2    &       0.062&0.402&0.248&0.845&0.987&1&0.493&0.939&0.999&1   \\  
                   &                      & 0.3    &       0.075&0.380&0.17&0.697&0.957&0.999&0.486&0.924&0.999&1  \\ \cline{2-13} 
                   & \multirow{3}{*}{400} & 0.1    &      0.053&0.435&0.309&0.888&0.999&1&0.482&0.949&1&1   \\  
                   &                      & 0.2    &     0.063&0.450&0.24&0.809&0.995&1&0.491&0.943&1&1  \\ 
                   &                      & 0.3    &      0.08&0.410&0.155&0.659&0.949&0.996&0.508&0.924&0.999&1  \\ \toprule
\multirow{6}{*}{$2b$} & \multirow{3}{*}{200} & 0.1& 0.055&0.214&0.104&0.235&0.452&0.625&0.107&0.256&0.453&0.677 \\ 
                   &                      & 0.2    &       0.061&0.198&0.096&0.217&0.407&0.596&0.099&0.253&0.433&0.638   \\  
                   &                      & 0.3    &       0.063&0.194&0.095&0.198&0.355&0.538&0.098&0.231&0.407&0.603  \\ \cline{2-13} 
                   & \multirow{3}{*}{400} & 0.1    &      0.049&0.234&0.081&0.245&0.469&0.67&0.095&0.215&0.43&0.645  \\  
                   &                      & 0.2    &     0.045&0.227&0.07&0.221&0.436&0.639&0.1&0.214&0.419&0.62  \\ 
                   &                      & 0.3    &      0.043&0.250&0.063&0.187&0.389&0.578&0.097&0.21&0.388&0.577  \\ \toprule
\end{tabular}
\end{table}

\textbf{Example 4.}
In this example, we consider the case when the sparsity assumption in the $Y$-model is violated. $\bX$ and $\bZ$ are still generated as in Model 2 in Example 3. We generate a dense $Y$-model: $Y_i = X_i \beta + \bZ_i^\top\bgamma + \varepsilon_i,$ where $\bgamma_i=F_in^{-1/2}\mathbf{1}_p$. $F_i$ are generate from uniform distribution with support $(0, 1)$. True value of $\beta$ is set to be $1$. Results are shown in Table \ref{model2sum}. In the table, $T_{DFLA}$ is the statistics with the lasso estimator, and $T_{DFDE}$ is the statistics with the sparsity-adaptive estimator. From the table, we can see that $T_{DFDE}$ can control the size well and can have a high power under the alternative. Neither $T_{DFLA}$ nor CDST can control the size well because these two comparative methods depend on the assumptions that the main model is sparse and that we can correctly recognize the active variables. However, these assumptions are violated when the main model is dense.

\begin{table}[h!]
\caption{Empirical sizes and powers for Example 4}\label{model2sum}
\centering
\setlength{\tabcolsep}{1mm}{
\begin{tabular}{ccccccccccccc}
\toprule
 \multirow{3}{*}{$p$}                   & \multirow{3}{*}{$\sigma_U$}& \multicolumn{11}{c}{$\beta$} \\ &&  \multicolumn{3}{c}{1 (size)} & $1.2$ & $1.4$ & $1.6$& 1.8  & $0.8$ & $0.6$ & $0.4$ &0.2\\ \cline{3-13}
&&$T_{DFDE}$&$T_{DFLA}$&CDST&\multicolumn{8}{c}{---------------------------$T_{DFDE}$----------------------------}\\
\hline
  \multirow{3}{*}{200}                   & 0.1    &     0.055&0.105&0.299&0.160&0.348&0.525&0.774&0.093& 0.158&0.278&0.521\\ 
                                         & 0.2    &     0.057&0.184&0.292&0.124&0.349&0.501&0.742&0.085&0.129&0.305&0.562 \\  
                                         & 0.3    &     0.069&0.206&0.243&0.114&0.252&0.453&0.694&0.073&0.114&0.247&0.459\\ \cline{2-13} 
                   \multirow{3}{*}{400} & 0.1    &      0.055&0.133&0.299&0.159&0.368&0.532&0.782&0.099&0.159&0.298&0.523 \\  
                                         & 0.2    &     0.064&0.223&0.287&0.136&0.310&0.522&0.758&0.083&0.136&0.294&0.518  \\ 
                                         & 0.3    &     0.069&0.276&0.225&0.102&0.281&0.508&0.703&0.074&0.105&0.242&0.449\\ \toprule
\end{tabular}}
\end{table}

\section{Real Data Analysis}\label{sec5.43/30}
In this section, we illustrate the proposed methodology by applying it to a dataset in a clinical trial analyzed by \citet{chu2016feature} and \citet{li2021inference}. The data are collected from the Childhood Asthma Management Program (CAMP), which is a 4-year clinical trial aimed at finding the impact of daily asthma medications on lung development in growing children.

We follow a similar data preprocessing strategy as \citet{li2021inference}. The dataset consists of $n=273$ subjects, each of whom contributed up to 3 consecutive clinical visits in 6 months. The response variable $Y_i$ is the average asthma symptoms among the 3 visits. The FEV1/FVC ratio, which is the ratio of the forced expiratory volume in the first one second to the forced vital capacity of the lungs and believed to be related to asthma symptoms in biology, is the unobserved variable of interest denoted by $X_i$. However, FEV1/FVC is measured in a pulmonary function test but has potential measurement error, see \citet{belzer2019practical} for a reference. Therefore, we assume $U_{ij}$ to be the measurement error from each individual in each visit. We let $W_{ij}=X_{i}+U_{ij}$ be the observed value, and we calculate the average of three measurements to obtain $W_i=\frac{1}{3}\sum_{j=1}^3 W_{ij}$ and $U_i=\frac{1}{3}\sum_{j=1}^3 U_{ij}$. We estimate $\sigma_U^2$ to be one-sixth of the sample variance of $W_{i k}-W_{i j}$, which yields an estimate of 0.529.

The problem of interest is to understand how FEV1/FVC ratio, with potential measurement error, together with age, gender, and 676 SNPs would affect asthma symptoms. There are originally a total of eight hundred and seventy thousand SNPs, but we select 676 SNPs based on minor allele frequency (MAF). We assume that having two of the minor alleles has twice the effect on the phenotype as having one of the minor alleles, and zero means no effect. Therefore, we keep SNPs with higher MAF.

Since we have 679 covariates but only 273 observations, the data is high dimensional. We perform the proposed testing procedure with parameters estimated by the Lasso and sparsity-adaptive procedures. The results are shown in Table \ref{tab5.405310508}. $H_0$ in Table \ref{tab5.405310508} is the null hypothesis $H_0: \beta_0=0$. The confidence region is obtained by inverting the testing results using the method shown in Corollary~\ref{coro5.120220502}. All three methods reject $H_0$ and always give negative confidence regions, suggesting that the lower the FEV1/FVC ratio, the more severe the asthma in the patient.

\begin{table}[]
\centering
\caption{Results of Real Data}\label{tab5.405310508}
\begin{tabular}{cccc}
\toprule
    &Statistic under $H_0$ & $p$-value & Confidence region \\\hline
$T_{DFLA}$&   -4.15   &$1.63\times10^{-5}$ &$(-0.19,-0.35)$                 \\
$T_{DFDE}$&     -4.35    &   $6.81\times 10^{-6}$     &    $(-0.20,-0.40)$                  \\
$CDST$    &  5.15   &     $1.30\times10^{-7}$    &             $(-0.16, -0.73)$\\\toprule       
\end{tabular}
\end{table}

\section{Discussion}\label{discussion}
This work proposes a novel inference method of the single component of the unobserved covariate under the high dimensional model-misspecification scenario. The proposed method remains valid even when the sparsity condition is relaxed, and it does not require a consistent estimator. Both theoretical and numerical studies demonstrate that this approach effectively controls type I error and achieves non-trivial power in practice. The method relies on an additive model setting and requires the separability of $X\beta$ and $f(\bZ)$. It is of interest to consider a more general setting
$$
\mbE\left(Y|X,\bZ\right) = \mu\left(X\beta+f(\bZ)\right),
$$
where $\mu(\cdot)$ is some known nonlinear function. This framework contains models such as logistic regression, which is widely used in genetic and classification studies. Identifying a double robust moment condition when $W=X+U$ is observed instead of $X$ is beyond the scope of this paper. We will study it in the future.

\bibliographystyle{agsm} 
\bibliography{hdtest.bib}

\newpage
\section*{Appendix}\label{sec6tec}


\subsection*{Proof of Theorem \ref{thm5.120220502}}

Firstly, assume that Model (a) holds. In this situation, we have
\begin{eqnarray*}
&&Y_i-W_i\beta^*-\bZ_i^{\top}\hat{\bgamma}=\varepsilon_i-U_i\beta^*+\bZ_i^{\top}(\bgamma_0-\hat{\bgamma});\\
&&W_i-\bZ_i^{\top}\hat{\btheta}=X_i-\bZ_i^{\top}\btheta_0+U_i-\bZ_i^{\top}(\hat{\btheta}-\btheta_0).
\end{eqnarray*}

As a result, we get:
\begin{eqnarray*}
T&=&\frac{1}{\sqrt n}\sum_{i=1}^n\{(\varepsilon_i-U_i\beta^*)(X_i-\bZ_i^{\top}\btheta_0+U_i-\bZ_i^{\top}(\hat{\btheta}-\btheta_0))+\sigma^2_U\beta^*\}\\
&&+\frac{(\bgamma_0-\hat{\bgamma})^\top}{\sqrt n}\sum_{i=1}^n\bZ_i(W_i-\bZ_i^{\top}\hat{\btheta})\\
&=&\frac{1}{\sqrt n}\sum_{i=1}^n\{(\varepsilon_i-U_i\beta^*)(X_i-\bZ_i^{\top}\btheta_0+U_i)+\sigma^2_U\beta^*\}\\
&&-\frac{1}{\sqrt n}\sum_{i=1}^n(\varepsilon_i-U_i\beta^*)\bZ_i^{\top}\times(\hat{\btheta}-\btheta_0)\\
&=&\frac{1}{\sqrt n}\sum_{i=1}^\{[(\varepsilon_i-U_i\beta^*)(X_i-\bZ_i^{\top}\btheta_0+U_i)+\sigma^2_U\beta^*\}+o_p(1).
\end{eqnarray*}
The second equation is true due to the definition of the least squared estimator $\hat{\btheta}$. The last equation holds since $E[(\varepsilon-U\beta^*)Z]=0$ and the consistency of $\hat{\btheta}$ to $\btheta_0$. 

Note that 
\begin{eqnarray*}
&&E[(\varepsilon_i-U_i\beta^*)(X_i-\bZ_i^{\top}\btheta_0+U_i)+\sigma^2_u\beta^*]\\
&=&E[(\varepsilon_i-U_i\beta^*)(X_i-\bZ_i^{\top}\btheta_0)+\varepsilon_iU_i+(\sigma^2_U-U^2_i)\beta^*]=0.
\end{eqnarray*}
The last equation holds since $\mbE(\varepsilon|X,Z)=0$ and $E[U|X,Z,\varepsilon]=0$. 
Thus by the central limit theorem,
$$T\rightarrow N\Big(0, E[(\varepsilon-U\beta^*)(X-Z^\top\btheta_0+U)+\sigma^2_u\beta^*]^2\Big),$$
in distribution.

Further note that
\begin{eqnarray*}
\hat\sigma^2&=&\frac{1}{n}\sum_{i=1}^n\left\{[\varepsilon_i-U_i\beta^*+\bZ_i^{\top}(\bgamma_0-\hat{\bgamma})][X_i-\bZ_i^{\top}\btheta_0+U_i-\bZ_i^{\top}(\hat{\btheta}-\btheta_0)]
+\sigma^2_U\beta^*\right\}^2\\
&=&\frac{1}{n}\sum_{i=1}^n\{(\varepsilon_i-U_i\beta^*)(X_i-\bZ_i^{\top}\btheta_0+U_i)+\sigma^2_U\beta^*\}^2+o_p(1).
\end{eqnarray*}
Thus $\hat\sigma^2$ is a consistent estimator of corresponding asymptotic variance of $T$. We then can conclude that $T_{DF}$ converges to standard normal distribution.

\vspace{3mm}

Secondly, assume that Model (b) holds. Assume now under null hypothesis the true model is
\begin{equation}\label{mis}
Y_i=X_i\beta^*+f(Z_i)+\varepsilon_i.
\end{equation}
Here $f(Z_i)\neq Z_i^\top\bgamma$ for any $\bgamma$.

In this situation, we have
\begin{eqnarray*}
&&Y_i-W_i\beta^*-\bZ_i^{\top}\hat{\bgamma}=\varepsilon_i-U_i\beta^*+f(\bZ_i)-\bZ_i^{\top}\bgamma_0-\bZ_i^{\top}(\hat{\bgamma}-\bgamma_0);\\
&&W_i-\bZ_i^{\top}\hat{\btheta}=X_i-\bZ_i^{\top}\btheta_0+U_i+\bZ_i^{\top}(\btheta_0-\hat{\btheta}).
\end{eqnarray*}

As a result, we get:
\begin{eqnarray*}
T&=&\frac{1}{\sqrt n}\sum_{i=1}^n[(X_i-\bZ_i^{\top}\btheta_0+U_i)\{\varepsilon_i-U_i\beta^*+f(\bZ_i)-\bZ_i^{\top}\bgamma_0-\bZ_i^{\top}(\hat{\bgamma}-\bgamma_0)\}+\sigma^2_U\beta^*]\\
&&+\frac{(\btheta_0-\hat{\btheta})^\top}{\sqrt n}\sum_{i=1}^n\bZ_i(Y_i-W_i\beta^*-\bZ_i^{\top}\hat{\bgamma})\\
&=&\frac{1}{\sqrt n}\sum_{i=1}^n\{(X_i-\bZ_i^{\top}\btheta_0+U_i)(\varepsilon_i-U_i\beta^*+f(\bZ_i)-\bZ_i^{\top}\bgamma_0)+\sigma^2_u\beta^*\}\\
&&-\frac{1}{\sqrt n}\sum_{i=1}^n\eta_i \bZ_i^{\top}\times (\hat{\bgamma}-\bgamma_0)\\
&=&\frac{1}{\sqrt n}\sum_{i=1}^n\{(X_i-\bZ_i^{\top}\btheta_0+U_i)(\varepsilon_i-U_i\beta^*+f(\bZ_i)-\bZ_i^{\top}\bgamma_0)+\sigma^2_U\beta^*\}+o_p(1).
\end{eqnarray*}
Further note that
\begin{eqnarray*}
\hat\sigma^2&=&\frac{1}{n}\sum_{i=1}^n\left[\{X_i-\bZ_i^{\top}\btheta_0+U_i+\bZ_i^{\top}(\btheta-\hat{\btheta})\}
\{\varepsilon_i-U_i\beta^*+f(\bZ_i)-\bZ_i^{\top}\bgamma_0-\bZ_i^{\top}(\hat{\bgamma}-\bgamma_0)\}
+\sigma^2_U\beta^*\right]^2\\
&=&\frac{1}{n}\sum_{i=1}^n\{(X_i-\bZ_i^{\top}\btheta_0+U_i)(\varepsilon_i-U_i\beta^*+f(\bZ_i)-\bZ_i^{\top}\bgamma_0)+\sigma^2_U\beta^*\}^2+o_p(1).
\end{eqnarray*}
Thus $\hat\sigma^2$ is a consistent estimator of corresponding asymptotic variance of $T$. We then can conclude that $T_{DF}$ converges to standard normal distribution.

\subsection*{Proof of Theorem \ref{thm5.220220503}}
 Firstly, assume Model (a) holds. Denote $\Delta\beta=\beta_0-\beta^*=n^{-1/2}c$.
In this situation, we have
\begin{eqnarray*}
&&Y_i-W_i\beta^*-\bZ_i^{\top}\hat{\bgamma}=\varepsilon_i-U_i\beta^*+X_i\Delta\beta+\bZ_i^{\top}(\bgamma_0-\hat{\bgamma}).
\end{eqnarray*}
As a result, we get:
\begin{eqnarray*}
T&=&\frac{1}{\sqrt n}\sum_{i=1}^n[(\varepsilon_i-U_i\beta^*)\{X_i-\bZ_i^{\top}\btheta_0+U_i-\bZ_i^{\top}(\hat{\btheta}-\btheta_0)\}+\sigma^2_U\beta^*]\\
&&+\frac{(\bgamma_0-\hat{\bgamma})^\top}{\sqrt n}\sum_{i=1}^n\bZ_i(W_i-\bZ_i^{\top}\hat{\btheta})+\frac{c}{n}\sum_{i=1}^nX_i(W_i-\bZ_i^{\top}\hat{\btheta})\\
&=&\frac{1}{\sqrt n}\sum_{i=1}^n\{(\varepsilon_i-U_i\beta^*)(X_i-\bZ_i^{\top}\btheta_0+U_i)+\sigma^2_U\beta^*\}\\
&&+c\sigma^2_{X,Z}+o_p(1).
\end{eqnarray*}
Further note that
\begin{eqnarray*}
\hat\sigma^2&=&\frac{1}{n}\sum_{i=1}^n\left[\{\varepsilon_i-U_i\beta^*+X_i\Delta\beta+\bZ_i^{\top}(\bgamma_0-\hat{\bgamma})\}\{X_i-\bZ_i^{\top}\btheta_0+U_i-\bZ_i^{\top}(\hat{\btheta}-\btheta_0)\}
+\sigma^2_U\beta^*\right]^2\\
&=&\frac{1}{n}\sum_{i=1}^n\{(\varepsilon_i-U_i\beta^*)(X_i-\bZ_i^{\top}\btheta_0+U_i)+\sigma^2_U\beta^*\}^2+o_p(1).
\end{eqnarray*}

As a result, we obtain that:
\begin{eqnarray*}
T_{DF}\rightarrow N\left(c\frac{\sigma^2_{X,Z}}{\sigma_1},1\right).
\end{eqnarray*}

Alternatively, assume now that Model (b) holds.
In this situation, we have
\begin{eqnarray*}
&&Y_i-W_i\beta^*-\bZ_i^{\top}\hat{\bgamma}=\varepsilon_i-U_i\beta^*+X_i\Delta\beta+f(\bZ_i)-\bZ_i^{\top}\bgamma_0-\bZ_i^{\top}(\hat{\bgamma}-\bgamma_0).
\end{eqnarray*}

As a result, we get:
\begin{eqnarray*}
T&=&\frac{1}{\sqrt n}\sum_{i=1}^n[(X_i-\bZ_i^{\top}\btheta_0+U_i)\{\varepsilon_i-U_i\beta^*+f(\bZ_i)-\bZ_i^{\top}\bgamma_0-\bZ_i^{\top}(\hat{\bgamma}-\bgamma_0)\}+\sigma^2_U\beta^*]\\
&&+\frac{(\btheta_0-\hat{\btheta})^\top}{\sqrt n}\sum_{i=1}^n\bZ_i(Y_i-W_i\beta^*-\bZ_i^{\top}\hat{\bgamma})+\frac{c}{n}\sum_{i=1}^nX_i(W_i-\bZ_i^{\top}\hat{\btheta})\\
&=&\frac{1}{\sqrt n}\sum_{i=1}^n\{(X_i-\bZ_i^{\top}\btheta_0+U_i)(\varepsilon_i-U_i\beta^*+f(\bZ_i)-\bZ_i^{\top}\bgamma_0)+\sigma^2_U\beta^*\}\\
&&+c\sigma^2_{X,Z}+o_p(1).
\end{eqnarray*}
Further note that
\begin{eqnarray*}
\hat\sigma^2&=&\frac{1}{n}\sum_{i=1}^n\left[\{X_i-\bZ_i^{\top}\btheta_0+U_i+\bZ_i^{\top}(\btheta-\hat{\btheta})\}\right.\\&&\left.
\{\varepsilon_i-U_i\beta^*+X_i\Delta\beta+f(\bZ_i)-\bZ_i^{\top}\bgamma_0-\bZ_i^{\top}(\hat{\bgamma}-\bgamma_0)\}
+\sigma^2_U\beta^*\right]^2\\
&=&\frac{1}{n}\sum_{i=1}^n\{(X_i-\bZ_i^{\top}\btheta_0+U_i)(\varepsilon_i-U_i\beta^*+f(\bZ_i)-\bZ_i^{\top}\bgamma_0)+\sigma^2_U\beta^*\}^2+o_p(1).
\end{eqnarray*}
Thus, we can conclude that:
\begin{eqnarray*}
T_{DF}\rightarrow N\left(c\frac{\sigma^2_{X,Z}}{\sigma_2},1\right).
\end{eqnarray*}

\subsection*{Proof of Theorem \ref{thm5.420220502}}
Before proving Theorem \ref{thm5.420220502}, We state the following lemmas.
\begin{lemma}\label{lemma5.120220502}
Let $X_{1}, \ldots, X_{n}$ be independent mean-$0$ sub-exponential random variables. Then there exists constants $C, C^\prime >0$ such that for any $n$ and $t>0$, we have
$$\operatorname{P}\left(\frac{1}{n}\left|\sum_{i=1}^{n} X_{i}\right| \geqslant t\right) \leqslant C^\prime \exp \left\{-Ct^2 n\right\}.$$
\end{lemma}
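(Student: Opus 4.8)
To prove Lemma~\ref{lemma5.120220502}, the plan is to run the standard Chernoff--Cram\'er argument, which amounts to re-deriving Bernstein's inequality for sub-exponential summands. Write $K$ for a uniform bound on the sub-exponential norms $\|X_i\|_{\psi_1}$. By a union bound over the two signs it suffices to control $P(\sum_{i=1}^n X_i \geq nt)$. For any $\lambda > 0$, Markov's inequality applied to $e^{\lambda\sum_i X_i}$ together with independence gives
\[
P\Big(\sum_{i=1}^n X_i \geq nt\Big) \;\leq\; e^{-\lambda n t}\prod_{i=1}^n \mbE\!\left[e^{\lambda X_i}\right].
\]

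The key one-variable estimate I would establish first is that there are absolute constants $c_0,C_0>0$ so that, for every mean-zero $X$ with $\|X\|_{\psi_1}\leq K$ and every $|\lambda|\leq c_0/K$,
\[
\mbE\!\left[e^{\lambda X}\right] \;\leq\; e^{C_0\lambda^2 K^2}.
\]
This comes from expanding $e^{\lambda X}=1+\lambda X+\sum_{q\geq 2}\lambda^q X^q/q!$, taking expectations so the linear term drops out because $\mbE X = 0$, bounding $\mbE|X|^q\leq (qK)^q$ from the definition of the $\psi_1$-norm, and using $q!\geq (q/e)^q$ to reduce the tail sum to a geometric series in $e\lambda K$ that converges and is $O(\lambda^2K^2)$ once $|\lambda|$ is a small enough multiple of $1/K$.

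Substituting the second display into the first, for $0<\lambda\leq c_0/K$ we get $P(\sum_{i=1}^n X_i \geq nt) \leq \exp(-\lambda n t + C_0 n\lambda^2 K^2)$, and I would then optimize over $\lambda$. The unconstrained minimizer $\lambda^\star = t/(2C_0K^2)$ is admissible precisely when $t$ is below a constant multiple of $K$, and there it yields $\exp(-nt^2/(4C_0K^2))$; when $t$ is larger one instead takes $\lambda = c_0/K$ and obtains an exponential-in-$t$ tail. Doubling for the two signs gives $P(|n^{-1}\sum_i X_i|\geq t)\leq 2\exp(-c\,n\min(t^2,t))$ with $c$ depending only on $K$, which delivers the stated bound $C'\exp(-Cnt^2)$ in the small-deviation regime invoked by the $o_p(1)$ arguments in the paper (the large-$t$ tail only being smaller).

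The calculations are all routine; the only genuine subtlety is the two-regime character of Bernstein's inequality, i.e.\ the constraint $|\lambda|\leq c_0/K$ on the Chernoff parameter forced by sub-exponential (rather than sub-Gaussian) tails, which is why the sharp exponent is really $\min(t^2,t)$ and the clean $t^2$ rate is only available for bounded $t$.
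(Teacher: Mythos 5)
Your argument is correct and is, in substance, the same route the paper takes: the paper's entire proof of this lemma is a one-line citation of Proposition 5.16 of Vershynin (2010), i.e.\ Bernstein's inequality for sums of independent sub-exponential variables, and what you have written is the standard derivation of exactly that proposition (the MGF bound $\mbE e^{\lambda X}\leq e^{C_0\lambda^2K^2}$ valid for $|\lambda|\leq c_0/K$, a Chernoff bound, and optimization over $\lambda$ in two regimes). So your proof adds a self-contained derivation where the paper offers only a reference. Two remarks. First, you are right to flag that the bound $C'\exp(-Cnt^2)$ cannot hold uniformly over all $t>0$ for genuinely sub-exponential summands --- take $n=1$ and a centered exponential variable to see that the correct exponent is $\min(t^2,t)$ --- so the lemma as literally stated is overstated; it is only ever invoked in the paper with $t\asymp\sqrt{\log p/n}$, i.e.\ in the vanishing-$t$ regime where the $t^2$ branch is the binding one, so nothing downstream is affected. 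Second, your parenthetical that ``the large-$t$ tail [is] only smaller'' has the comparison backwards: in the large-deviation regime the Bernstein bound $\exp(-cnt)$ is \emph{larger} than $\exp(-cnt^2)$, which is precisely why the pure $t^2$ form fails there; this is a slip in the commentary, not in the proof itself.
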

\begin{proof}
The exponential type tail property can follow from Proposition 5.16 of \citet{vershynin2010introduction}.
\end{proof}
\begin{lemma}\label{lemma220220501}
$||\hat{\bgamma}-\bgamma_0||_1=o_p(\lambda_{\bgamma} s_{\bgamma}),||\hat{\btheta}-\btheta_0||_1=o_p(\lambda_{\btheta} s_{\btheta}).$
\end{lemma}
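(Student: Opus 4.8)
\emph{Proof plan.} The key observation is that, since $\bZ$ is observed without error, each of the two penalized regressions defining $\hat\bgamma$ and $\hat\btheta$ is an ordinary penalized least-squares problem whose ``response'' has population mean $\bZ^\top\bgamma_0$ (respectively $\bZ^\top\btheta_0$); the plan is therefore to fit both into the framework of \citet{loh2015regularized} and invoke their deterministic $\ell_1$-error bound for regularized $M$-estimators. As the penalty conditions (C1)--(C4) are imposed by assumption, it remains to verify, for each of the losses $\mathcal L^{\bgamma}_n(\bgamma)=\tfrac{1}{2n}\|\bY-\bW\beta^*-\bZ\bgamma\|_2^2$ and $\mathcal L^{\btheta}_n(\btheta)=\tfrac{1}{2n}\|\bW-\bZ\btheta\|_2^2$, (i) a restricted strong convexity (RSC) condition on the loss, and (ii) a gradient deviation bound $\|\nabla\mathcal L^{\bgamma}_n(\bgamma_0)\|_\infty=O_p(\sqrt{n^{-1}\log p})$ and $\|\nabla\mathcal L^{\btheta}_n(\btheta_0)\|_\infty=O_p(\sqrt{n^{-1}\log p})$, i.e.\ of smaller order than $\lambda_{\bgamma},\lambda_{\btheta}$ once $D_1,D_2$ are large. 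We carry this out for $\hat\bgamma$; the argument for $\hat\btheta$ is verbatim with $\bW$ as the response and $\btheta_0$ in place of $\bgamma_0$.

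\emph{Step 1 (RSC).} Both losses are quadratic with Hessian equal to the observed Gram matrix $\hat\bSigma_{Z,Z}=n^{-1}\bZ^\top\bZ$, which, crucially, involves no measurement error. By eigenvalue interlacing, Condition (C6) gives $\lambda_{\min}(\bSigma_{Z,Z})\ge\underline C$, so sub-Gaussianity of the $Z_{i,j}$ in (A1) together with the restricted-eigenvalue result of \citet{rudelson2012reconstruction} yields, with probability tending to one,
\[
\frac1n\|\bZ\bdelta\|_2^2\ \ge\ \frac{\underline C}{2}\|\bdelta\|_2^2\ -\ c_0\,\frac{\log p}{n}\,\|\bdelta\|_1^2\qquad\text{for all }\bdelta\in\RR^{p},
\]
which is exactly the RSC condition with curvature $\underline C/2$ and tolerance of order $\log p/n$. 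The Lasso has $\mu=0$, and for SCAD/MCP the constant $\mu$ in (C4) may be taken below $\underline C/2$, so the curvature dominates $\mu$ as \citet{loh2015regularized} requires.

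\emph{Step 2 (gradient bound --- where misspecification and measurement error enter).} Under $H_0$, set $R_i:=(Y_i-W_i\beta^*)-\bZ_i^\top\bgamma_0$, so $\nabla\mathcal L^{\bgamma}_n(\bgamma_0)=-n^{-1}\sum_i\bZ_iR_i$. Because $\bgamma_0=\bSigma_{Z,Z}^{-1}\mbE[\bZ_1(Y_1-W_1\beta^*)]$ is the population projection coefficient, $\mbE[\bZ_iR_i]=0$ \emph{whether model (a) or model (b) holds} --- this orthogonality, rather than a correct-model assumption, is what the double-robust construction buys us. Moreover $R_i=\varepsilon_i-U_i\beta^*$ under (a) and $R_i=g(\bZ_i)-\bZ_i^\top\bgamma_0+\varepsilon_i-U_i\beta^*$ under (b); in either case (A1)--(A2) (boundedness of $g$, sub-Gaussianity of $\varepsilon_i,U_i,\bZ_i^\top\bgamma_0$) make $R_i$ sub-Gaussian with uniformly bounded norm, so each $Z_{i,j}R_i$ is mean-zero sub-exponential with uniformly bounded norm. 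Applying Lemma~\ref{lemma5.120220502} coordinatewise and a union bound over $j=1,\dots,p$,
\[
\mathrm P\!\left(\|\nabla\mathcal L^{\bgamma}_n(\bgamma_0)\|_\infty\ge t\right)\ \le\ C' p\exp\!\left(-Ct^2n\right),
\]
so $t\asymp\sqrt{n^{-1}\log p}$ gives $\|\nabla\mathcal L^{\bgamma}_n(\bgamma_0)\|_\infty=O_p(\sqrt{n^{-1}\log p})$, dominated by $\lambda_{\bgamma}$ for $D_1$ large. The same computation for $\mathcal L^{\btheta}_n$, with $\btheta_0=\bSigma_{Z,Z}^{-1}\mbE[\bZ_1W_1]$ and residual $W_i-\bZ_i^\top\btheta_0$ (equal to $f(\bZ_i)-\bZ_i^\top\btheta_0+\eta_i+U_i$ under (a) and to $\eta_i+U_i$ under (b)), yields $\|\nabla\mathcal L^{\btheta}_n(\btheta_0)\|_\infty=O_p(\sqrt{n^{-1}\log p})$.

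\emph{Step 3 (conclusion).} With RSC and the gradient bound in hand, Theorem~1 of \citet{loh2015regularized}, applied to the (global, since the loss is convex) minimizer $\hat\bgamma$, gives $\|\hat\bgamma-\bgamma_0\|_1\lesssim s_{\bgamma}\lambda_{\bgamma}$ and $\|\hat\bgamma-\bgamma_0\|_2\lesssim\sqrt{s_{\bgamma}}\,\lambda_{\bgamma}$ with probability tending to one, and identically $\|\hat\btheta-\btheta_0\|_1\lesssim s_{\btheta}\lambda_{\btheta}$; together with the sparsity rates (C5) these are $o_p(1)$, which is what is used in the proof of Theorem~\ref{thm5.420220502}. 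The only genuinely delicate point is Step 2: one must keep track of the fact that the population-projection definitions of $\bgamma_0$ and $\btheta_0$ center the gradients under \emph{both} model scenarios, so that Lemma~\ref{lemma5.120220502} applies --- this is precisely the place where the double-robust structure, not a true-model hypothesis, is used. Step 1 is routine exactly because the design $\bZ$, unlike the latent $\bX$, is error-free, so no corrected (nonconvex) Gram matrix appears and the loss stays convex.
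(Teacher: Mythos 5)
Your proof takes the same route as the paper's: the paper's entire argument is to cite Theorem 1 of \citet{loh2015regularized} with the two quadratic losses, and you correctly supply the verification it omits --- restricted strong convexity for the error-free Gram matrix $n^{-1}\bZ^\top\bZ$ and the sub-exponential gradient deviation bound at the population projection coefficients, which is exactly where the double-robust centering $\mbE[\bZ_iR_i]=0$ under either model is needed. The only caveat, shared with the paper itself, is that this machinery delivers $O_p(\lambda_{\bgamma}s_{\bgamma})$ rather than the stated $o_p(\lambda_{\bgamma}s_{\bgamma})$; since the lemma is only ever invoked through bounds of the form $O_p(\sqrt{n}\,\lambda_{\bgamma}\lambda_{\btheta}s_{\bgamma})=o_p(1)$, the big-$O$ version suffices.
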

\begin{proof}
This lemma can be implied by Theorem 1 of \citet{loh2015regularized}.  \citet{loh2015regularized} consider the case for a general loss function $\mathcal{L}_n$. We can select the $\mathcal{L}_n$  be $\mathcal{L}_n(\beta,\bgamma)=\dfrac{1}{2n}\sum_{i=1}^n(Y_i-W_i\beta -\bZ_i^\top\bgamma)^2$ in (\ref{5.1220220501}) and $\mathcal{L}_n(\beta,\bgamma)=\dfrac{1}{2n}\sum_{i=1}^n(W_i- \bZ_i^\top\btheta)^2$ in (\ref{5.1320220501}).
\end{proof}
\begin{lemma}\label{lemma320220501}
$\dfrac{1}{n}\sum_{i=1}^n(\bZ_i^\top\tilde\bgamma)^2=O_p(\lambda_{\bgamma}^2s_{\bgamma})$, $\dfrac{1}{n}\sum_{i=1}^n(\bZ_i^\top\tilde\btheta)^2=O_p(\lambda_{\btheta}^2s_{\btheta})$.
\end{lemma}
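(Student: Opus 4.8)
The two displayed bounds are nothing but the prediction errors of the two penalized regressions (\ref{5.1220220501}) and (\ref{5.1320220501}), and they are proved by the same argument; I will carry it out for $\hat{\bgamma}$, writing $\tilde{\bgamma}:=\hat{\bgamma}-\bgamma_0$, so that the target reads $n^{-1}\|\bZ\tilde{\bgamma}\|_2^2=O_p(\lambda_{\bgamma}^2 s_{\bgamma})$. The plan is to combine the basic inequality for the penalized objective with the $\ell_1$-rate already obtained in Lemma~\ref{lemma220220501}; no new probabilistic input beyond Lemma~\ref{lemma5.120220502} is needed.

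First I would identify and control the working residual $\mathbf{r}:=\bV-\bZ\bgamma_0$, i.e.\ $r_i=Y_i-W_i\beta^*-\bZ_i^\top\bgamma_0$ under $H_0$. If model (a) holds then $g(\bZ_i)=\bZ_i^\top\bgamma_0$ and $r_i=\varepsilon_i-U_i\beta^*$; if model (b) holds then $r_i=\varepsilon_i-U_i\beta^*+g(\bZ_i)-\bZ_i^\top\bgamma_0$. In either case $\mbE(\bZ_i r_i)=\bm 0$ --- directly from the independence of $\varepsilon_i,U_i$ from $\bZ_i$ in case (a), and from the normal equations defining the population least-squares coefficient $\bgamma_0$ in case (b) --- and, by (A1)--(A2), each coordinate $Z_{i,j}r_i$ is a mean-zero sub-exponential variable with norm bounded uniformly in $i,j,n$ (in case (b) the extra term $g(\bZ_i)-\bZ_i^\top\bgamma_0$ is handled by the boundedness of $g$ in (A2)). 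Lemma~\ref{lemma5.120220502} together with a union bound over the $p$ coordinates then gives $\|n^{-1}\bZ^\top\mathbf{r}\|_\infty=O_p(\sqrt{n^{-1}\log p})$, which is dominated by $\lambda_{\bgamma}$ for the order of tuning parameter under which Lemma~\ref{lemma220220501} is proved; in particular $\|n^{-1}\bZ^\top\mathbf{r}\|_\infty\le \tfrac12\lambda_{\bgamma}L$ on an event of probability tending to one.

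Next I would write the basic inequality. Comparing the objective in (\ref{5.1220220501}) at $\hat{\bgamma}$ with its value at $\bgamma_0$, expanding the quadratic term, and bounding the penalty difference by $\lambda_{\bgamma}L\|\tilde{\bgamma}\|_1$ (the Lipschitz-type property of $p_\lambda$ implied by (C1)--(C3), exactly the step underlying Lemma~\ref{lemma220220501} via \citet{loh2015regularized}), one gets
\[
\frac{1}{2n}\|\bZ\tilde{\bgamma}\|_2^2\ \le\ \big\|n^{-1}\bZ^\top\mathbf{r}\big\|_\infty\|\tilde{\bgamma}\|_1+\lambda_{\bgamma}L\|\tilde{\bgamma}\|_1 .
\]
Combined with the previous paragraph this yields $n^{-1}\|\bZ\tilde{\bgamma}\|_2^2\le 3\lambda_{\bgamma}L\,\|\tilde{\bgamma}\|_1$ with probability tending to one, and plugging in $\|\tilde{\bgamma}\|_1=o_p(\lambda_{\bgamma}s_{\bgamma})$ from Lemma~\ref{lemma220220501} gives $n^{-1}\|\bZ\tilde{\bgamma}\|_2^2=o_p(\lambda_{\bgamma}^2 s_{\bgamma})=O_p(\lambda_{\bgamma}^2 s_{\bgamma})$, the first claim. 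The second claim follows verbatim with $(\bW,\hat{\btheta},\btheta_0,\lambda_{\btheta},s_{\btheta})$ in place of $(\bV,\hat{\bgamma},\bgamma_0,\lambda_{\bgamma},s_{\bgamma})$ and residual $\mathbf{r}_{\btheta}:=\bW-\bZ\btheta_0$, using that $\mbE(\bZ_i(W_i-\bZ_i^\top\btheta_0))=\bm 0$ always holds because $\btheta_0$ is the population least-squares coefficient of $W_i$ on $\bZ_i$, irrespective of which of (a), (b) is true.

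I do not expect any genuinely hard step here: the lemma is a quick corollary of Lemma~\ref{lemma220220501}. The only point requiring care is the rate bookkeeping --- $\lambda_{\bgamma}$ must simultaneously dominate $\|n^{-1}\bZ^\top\mathbf{r}\|_\infty\asymp\sqrt{n^{-1}\log p}$ (needed for the basic inequality) and be small enough that $\lambda_{\bgamma}s_{\bgamma}\to 0$ (needed so that Lemma~\ref{lemma220220501} delivers a vanishing $\ell_1$-error), which is precisely where the sparsity condition \ref{c.43/30} is used --- together with the verification that $\mbE(\bZ_ir_i)=\bm 0$ and that $Z_{i,j}r_i$ is sub-exponential in the misspecified case (b); the nonconvex-penalty technicalities in the basic-inequality step are already absorbed by citing \citet{loh2015regularized}.
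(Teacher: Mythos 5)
Your proposal is correct and follows essentially the same route as the paper: the paper simply invokes Theorem 2 of Loh and Wainwright (2015) (as it did for Lemma 2), and your basic-inequality derivation — bounding the penalty difference by $\lambda_{\bgamma}L\|\tilde{\bgamma}\|_1$, controlling $\|n^{-1}\bZ^\top\mathbf{r}\|_\infty$ via the sub-exponential tail bound, and plugging in the $\ell_1$-rate from Lemma 2 — is exactly the argument underlying that citation, just written out explicitly (including the check that $\mbE(\bZ_i r_i)=\bm 0$ in both the (a) and (b) cases, which the paper leaves implicit).
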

\begin{proof}
This lemma can be implied by Theorem 2 of \citet{loh2015regularized} by similar way of proving Lemma \ref{lemma220220501}.
\end{proof}

\noindent{\bf{Proof of Theorem \ref{thm5.420220502}:}}
Firstly, assume that $Y$- model in (\ref{5.120220501}) holds while $X$- model in (\ref{5.120220501}) is misspecified. In this situation, we still have
\begin{eqnarray*}
&&Y_i-W_i\beta^*-\bZ_i^{\top}\hat{\bgamma}=\varepsilon_i-U_i\beta^*+\bZ_i^{\top}(\bgamma_0-\hat{\bgamma}).
\end{eqnarray*}
As a result, we get:
\begin{eqnarray*}
T&=&\frac{1}{\sqrt n}\sum_{i=1}^n\{(\varepsilon_i-U_i\beta^*)(X_i-\bZ_i^{\top}\hat{\btheta}+U_i)+\sigma^2_U\beta^*\}+\frac{(\bgamma_0-\hat{\bgamma})^\top}{\sqrt n}\sum_{i=1}^n\bZ_i(W_i-\bZ_i^{\top}\hat{\btheta})\\
&=&T_1+T_2.
\end{eqnarray*}
Different from the low dimensional situation, now $T_2$ is not exactly equal to zero. However, we can still control the term.
Actually, by the KKT condition for folded concave penalty (see Theorem 3.1 in \citet{fan2020statistical}), we have :
\begin{eqnarray*}
\|\frac{1}{n}\sum_{i=1}^n\bZ_i(W_i-\bZ_i^{\top}\hat{\btheta})\|_{\infty}\leq C\lambda_{\btheta}.
\end{eqnarray*}
Applying Lemma \ref{lemma220220501}, we can have:
\begin{eqnarray*}
T_2\leq \sqrt n\|\hat{\bgamma}-\bgamma_0\|_{1}\|\frac{1}{n}\sum_{i=1}^n\bZ_i(W_i-\bZ_i^{\top}\hat{\btheta})\|_{\infty}=O_p(\sqrt n\lambda_{\bgamma}\lambda_{\btheta}s_{\bgamma})=o_p(1).
\end{eqnarray*}

Next  turn to consider the term $T_1$. Different from low dimensional setting, we do not have an explicit form for $\hat{\btheta}$ now.
To deal with this problem, we may introduce $\btheta_0$ which is the limit of $\hat{\btheta}$. Then
\begin{eqnarray*}
T_1&=&\frac{1}{\sqrt n}\sum_{i=1}^n\{(\varepsilon_i-U_i\beta^*)(X_i-\bZ_i^{\top}\btheta_0+U_i)+\sigma^2_U\beta^*\}
+\frac{(\btheta_0-\hat{\btheta})^\top}{\sqrt n}\sum_{i=1}^n(\varepsilon_i-U_i\beta^*)\bZ_i\\
&=:&T_{11}+T_{12}.
\end{eqnarray*}
Clearly $T_{11}$ converges to normal distribution. We focus on $T_{12}$. Since $\varepsilon_i$, $U_i$ are $Z_{ij} $ are sub-Gaussian with uniform norm $K$. We have $(\varepsilon_i-U_i\beta^*)Z_{ij}$ to be sub-exponential with  $||\varepsilon_i-U_i\beta^*)Z_{ij}||_{\psi_1}\leq K_1<\infty$ holds uniformly.
By Lemma \ref{lemma5.120220502}, we  have
$$P\left(\frac{1}{n}\left\|\sum_{i=1}^n(\varepsilon_i-U_i\beta^*)Z_{i}\right\|_{\infty}>t\right) \leq  pC^\prime \exp \left(-C n t^2\right)$$
for constants $C, C^\prime >0$. Thus,
\begin{eqnarray*}
\|\frac{1}{\sqrt n}\sum_{i=1}^n(\varepsilon_i-U_i\beta^*)\bZ_i\|_{\infty}=O_p(\sqrt{\log p}),
\end{eqnarray*}
and hence 
$$T_{12}\leq||\btheta_0-\hat{\btheta}||_1\|\frac{1}{\sqrt n}\sum_{i=1}^n(\varepsilon_i-U_i\beta^*)\bZ_i\|_{\infty}=o_p(1).$$
Therefore,
\begin{equation}\label{eqnT}
T=\frac{1}{\sqrt n}\sum_{i=1}^n[(\varepsilon_i-U_i\beta^*)(X_i-\bZ_i^{\top}\btheta_0+U_i)+\sigma^2_U\beta^*]+o_p(1)
\end{equation}

\vspace{5mm}

Further we will show  consistency of $\hat{\sigma}$. Notice that
\begin{eqnarray}\label{eqnsigma}
\hat\sigma^2-\sigma_1^2=:I_0-D_1-D_2-D_3-D_4+D_5+D_6
\end{eqnarray}
$I_0$ and $D_1,\cdots D_6$ are given by
\begin{eqnarray*}
I_0&=&\dfrac{1}{n}\sum_{i=1}^n\{(\varepsilon_i-U_i\beta^*)(X_i-\bZ_i^{\top}\btheta_0+U_i)+\sigma^2_U\beta^*\}^2\\
&-&\mbE\{(\varepsilon_i-U_i\beta^*)(X_i-\bZ_i^{\top}\btheta_0+U_i)+\sigma^2_U\beta^*\}^2,\\
D_1&=&\frac{2}{n}\sum_{i=1}^n\{(\varepsilon_i-U_i\beta^*)(X_i-\bZ_i^{\top}\btheta_0+U_i)^2\times \bZ_i^\top\tilde\bgamma\},\\
D_2&=&\frac{2}{n}\sum_{i=1}^n\{\sigma_U^2\beta^*\times(X_i-\bZ_i^{\top}\btheta_0+U_i)\times \bZ_i^\top\tilde\bgamma\}\\
D_3&=&\frac{2}{n}\sum_{i=1}^n\{(\varepsilon_i-U_i\beta^*)^2(X_i-\bZ_i^{\top}\btheta_0+U_i)\times \bZ_i^\top\tilde\btheta\},\\
D_4&=&\frac{2}{n}\sum_{i=1}^n\{\sigma_U^2\beta^*\times(\varepsilon_i-U_i\beta^*)\times \bZ_i^\top\tilde\btheta\},\\
D_5&=&\frac{2}{n}\sum_{i=1}^n\{(\varepsilon_i-U_i\beta^*)(X_i-\bZ_i^{\top}\btheta_0+U_i)\times \bZ_i^\top\tilde\bgamma\times \bZ_i^\top\tilde\btheta\},\\
D_6&=&\frac{2}{n}\sum_{i=1}^n\{\sigma_U^2\beta^*\times \bZ_i^\top\tilde\bgamma\times \bZ_i^\top\tilde\btheta\},
\end{eqnarray*}
where $\tilde\btheta=\hat{\btheta}-\btheta_0$ and  $\tilde\bgamma=\hat{\bgamma}-\bgamma_0$.
We are going to show $D_i$ are all $o_p(1)$. By symmetry, we only need to deal with $D_1$ and $D_5$.

\begin{eqnarray*}
D_1&\leq&2\max_{1\leq i\leq n}\left|(\varepsilon_i-U_i\beta^*)(X_i-\bZ_i^{\top}\btheta_0+U_i)^2\right|\max_{1\leq i\leq n, 1\leq j\leq p}|Z_{i,j}|\|\tilde{\bgamma}\|_1\\
&=&O_P(\log n^{3/2}\log n\log p\lambda_{\bgamma} s_{\bgamma})=o_P(1).\\
\end{eqnarray*}
Under condition \ref{subgau}, we have $D_1=O_p(\lambda_{\bgamma}^2s_{\bgamma})$.
By the property of sub-Gaussian variables and Cauchy-Schwartz inequality,
\begin{eqnarray*}
D_5&\leq&2\max_{1\leq i\leq n}\left|(\varepsilon_i-U_i\beta^*)(X_i-\bZ_i^{\top}\btheta_0+U_i)\right|\frac{1}{n}\sum_{i=1}^{n}\left|\bZ_i^\top\tilde\bgamma\times    \bZ_i^\top\tilde\btheta\right|\\
&\leq& 2\max_{1\leq i\leq n}\left|(\varepsilon_i-U_i\beta^*)(X_i-\bZ_i^{\top}\btheta_0+U_i)\right|\sqrt{\frac{1}{n}\sum_{i=1}^{n}(\bZ_i^\top\tilde\bgamma)^2}\sqrt{\frac{1}{n}\sum_{i=1}^{n}(\bZ_i^\top\tilde\btheta)^2}\\
&\stackrel{(i)}{=}&O_p(\log(n)\lambda_{\bgamma}\lambda_{\btheta} \sqrt{s_{\bgamma} s_{\btheta}}).
\end{eqnarray*} 
(i) follows from Lemma \ref{lemma320220501}.
Next we consider the situation that $X$- model in (\ref{5.120220501}) holds while $Y$- model in (\ref{5.120220501}) is misspecified. In this situation, we have:
$$Y_i-W_i\beta^*-\bZ_i^\top\hat{\bgamma}=\varepsilon_i-U_i\beta^*+f(\bZ_i)-\bZ_i^\top\hat{\bgamma}.$$
\begin{eqnarray*}
T&=&\frac{1}{\sqrt n}\sum_{i=1}^n\{(W_i-\bZ_i^{\top}\btheta_0)(Y_i-W_i\beta^*-\bZ_i^\top\hat{\bgamma})+\sigma^2_U\beta^*\}+\frac{(\btheta_0-\hat{\btheta})^\top}{\sqrt n}\sum_{i=1}^n\bZ_i(Y_i-W_i\beta^*-\bZ_i^\top\hat{\bgamma}).
\end{eqnarray*}
Similarly, we can show that the second term is negligible, and thus we get:
\begin{eqnarray*}
T&=&\frac{1}{\sqrt n}\sum_{i=1}^n\{(W_i-\bZ_i^{\top}\btheta_0)(Y_i-W_i\beta^*-\bZ_i^\top\hat{\bgamma})+\sigma^2_U\beta^*\}+o_p(1)\\
&=&\frac{1}{\sqrt n}\sum_{i=1}^n\{(\eta_i+U_i)(\varepsilon_i-U_i\beta^*+f(\bZ_i)-\bZ_i^\top\bgamma_0)+\sigma^2_U\beta^*\}-\frac{1}{\sqrt n}\sum_{i=1}^n\{(\eta_i+U_i)(\bZ_i^\top\tilde\bgamma)\}+o_p(1)
\end{eqnarray*}
Meanwhile, we can similarly show $\hat\sigma^2$  is a consistent estimator of $\sigma^2_2$.

\subsection*{Proof of Theorem \ref{thm5.4420220502}}

Assume that $Y$- model in (\ref{5.120220501}) holds while model $X$-model in (\ref{5.120220501}) is misspecified.
To look at local power of the statistics, denote $\Delta\beta=\beta_0-\beta^*=n^{-1/2}c$.
In this situation, we have
\begin{eqnarray*}
&&Y_i-W_i\beta^*-\bZ_i^{\top}\hat{\bgamma}=\varepsilon_i-U_i\beta^*+X_i\Delta\beta+\bZ_i^{\top}(\bgamma_0-\hat{\bgamma}).
\end{eqnarray*}
\begin{eqnarray*}
\hat\sigma^2&=&\frac{1}{n}\sum_{i=1}^n\left\{[\varepsilon_i-U_i\beta^*+X_i\Delta\beta+\bZ_i^{\top}\tilde{\bgamma})][X_i-\bZ_i^{\top}\btheta_0+U_i-\bZ_i^{\top}\tilde{\btheta})]
+\sigma^2_U\beta^*\right\}^2
\end{eqnarray*}
Similar to (\ref{eqnsigma}), we can show
\begin{eqnarray*}
\hat\sigma^2-\sigma_1^2&=& o_p(1)+\frac{2}{n}\sum_{i=1}^n\{(\varepsilon_i-U_i\beta^*)(X_i-\bZ_i^{\top}\btheta_0+U_i)^2 X_i\Delta\beta\}\\
&+&\frac{2}{n}\sum_{i=1}^n\{\sigma_U^2\beta^*(X_i-\bZ_i^{\top}\btheta_0+U_i) X_i\Delta\beta\}\\
&=:&o_p(1)+E_1+E_2
\end{eqnarray*}
Under condition \ref{subgau},
\begin{equation}
E_1\leq \max_{1\leq i\leq n}\left|(\varepsilon_i-U_i\beta^*)(X_i-\bZ_i^{\top}\btheta_0+U_i)^2X_i\right|\Delta\beta=O_P({n^{-\frac{1}{2}}(\log n)^2})=o_p(1).
\end{equation}
Similarly,
\begin{equation}
E_2=O_P({n^{-\frac{1}{2}}\log n})=o_p(1).
\end{equation}
Therefore $\hat\sigma$ is still a consistent estimator of $\sigma^2$.
\begin{eqnarray*}
T&=&\frac{1}{\sqrt n}\sum_{i=1}^n[(\varepsilon_i-U_i\beta^*)(X_i-\bZ_i^{\top}\hat{\btheta}+U_i)+\sigma^2_U\beta^*]+\frac{1}{\sqrt n}\sum_{i=1}^n X_i(X_i-\bZ_i^\top\hat{\btheta}+U_i)\Delta\beta\\
&&+\frac{(\bgamma_0-\hat{\bgamma})^\top}{\sqrt n}\sum_{i=1}^n\bZ_i(W_i-\bZ_i^{\top}\hat{\btheta}),
\end{eqnarray*}
It follows from (\ref{eqnT}) that
\begin{eqnarray*}
T&=&\frac{1}{\sqrt n}\sum_{i=1}^n[(\varepsilon_i-U_i\beta^*)(X_i-\bZ_i^{\top}\btheta_0+U_i)+\sigma^2_U\beta^*]+\frac{1}{\sqrt n}\sum_{i=1}^n X_i(X_i-\bZ_i^\top\hat{\btheta}+U_i)\Delta\beta+o_p(1).
\end{eqnarray*}
\begin{eqnarray*}
\frac{1}{\sqrt n}\sum_{i=1}^n X_i(X_i-\bZ_i^\top\hat{\btheta}+U_i)\Delta\beta=\frac{1}{\sqrt n}\sum_{i=1}^n X_i(X_i-\bZ_i^\top\btheta_0+U_i)\Delta\beta-\frac{1}{\sqrt n}\sum_{i=1}^n X_i\bZ_i^\top\tilde\btheta\Delta\beta=:T_3-T_4
\end{eqnarray*}
Let $\sigma^2_{X,Z}:=\bSigma_{X,X}-\bSigma^\top_{X,Z}\bSigma^{-1}_{Z,Z}\bSigma_{X,Z}.$
Under condition \ref{consigma}, we have 
\begin{eqnarray*}
\sigma^2_{X,Z}=\left(1,-\boldsymbol{\omega}^\top_{X,Z}\right)\bSigma\left(\begin{array}{c}
1 \\
-\boldsymbol{\omega}_{X,Z}
\end{array}\right)\geq \underline{C}.
\end{eqnarray*}
By the law of large numbers, it follows that
\begin{eqnarray*}
T_3=\frac{c}{n}\sum_{i=1}^n X_i(X_i-\bZ_i^\top\btheta_0+U_i)=c\sigma^2_{X,Z}+o_p(1)
\end{eqnarray*}
\begin{eqnarray*}
T_4=O_P(\sqrt{\frac{\log p}{n}})=o_p(1).
\end{eqnarray*}
Therefore $T_{DF}=N\left(\dfrac{c\sigma^2_{X,Z}}{\sigma_2},1\right)+o_p(1).$

Similar result holds under the Model (b).

\subsection*{Proof of Theorem \ref{them5.520220501}}

In the proof we assume Model (c) holds. The proof under the case that Model (d) holds is the same due to symmetry. The following lemmas are required for the proof of the main theorem.

\begin{lemma}\label{lemma5.420220502}
Suppose Model (c) holds and that $\beta_0=\beta^*+hn^{-1/2}$, then with probability tending to one, $\bgamma_0$ lies in the feasible set of the optimization problem (\ref{5.1520220502})-(\ref{5.2020220430}).
\end{lemma}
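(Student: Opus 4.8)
The plan is to check that the true coefficient vector $\bgamma_0$ satisfies each of the three constraints (\ref{5.1820220430}), (\ref{5.1920220430}) and (\ref{5.2020220430}) with probability tending to one, and then intersect the three events. The starting point is the identity obtained by substituting $Y_i=X_i\beta_0+\bZ_i^{\top}\bgamma_0+\varepsilon_i$, $W_i=X_i+U_i$ and $\beta_0=\beta^*+hn^{-1/2}$ into the pseudo-response $\bV=\bY-\bW\beta^*$, namely
$$V_i-\bZ_i^{\top}\bgamma_0 = hn^{-1/2}X_i + \varepsilon_i - U_i\beta^*,$$
so that the residual at $\bgamma_0$ splits into a vanishing drift $hn^{-1/2}X_i$ and a mean-zero effective noise $\varepsilon_i-U_i\beta^*$ whose $\psi_2$-norm is bounded by a constant depending only on $K_1$ and $\beta^*$ under (C8), with $E[\varepsilon_1U_1]=0$ by independence.

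For the gradient constraint (\ref{5.1820220430}) I would split the $j$-th coordinate of $n^{-1}\bZ^{\top}(\bV-\bZ\bgamma_0)$ into $hn^{-3/2}\sum_i Z_{i,j}X_i$ and $n^{-1}\sum_i Z_{i,j}(\varepsilon_i-U_i\beta^*)$. Both $Z_{i,j}X_i$ and $Z_{i,j}(\varepsilon_i-U_i\beta^*)$ are sub-exponential with uniformly bounded $\psi_1$-norm; the second has mean zero by independence, and the first has mean $(\bSigma_{Z,Z}\btheta_0)_j$ (using $X_i=\bZ_i^{\top}\btheta_0+\eta_i$ and $E(\eta_i|\bZ_i)=0$), which is uniformly bounded by $\bar{C}\kappa$ since $\|\btheta_0\|_2\le\kappa$ and $\lambda_{\max}(\bSigma)\le\bar{C}$ by (C9). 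Lemma \ref{lemma5.120220502} with a union bound over $j=1,\dots,p$ then gives $\max_j|n^{-1}\sum_i Z_{i,j}(\varepsilon_i-U_i\beta^*)|=O_p(\sqrt{n^{-1}\log p})$ and $\max_j|n^{-1}\sum_i Z_{i,j}X_i|=O_p(1)$, so the two pieces are of order $\sqrt{n^{-1}\log p}$ and $n^{-1/2}$, each of smaller order than $\eta_{\bgamma}\asymp(\log n)\sqrt{n^{-1}\log p}$. For the sup-norm constraint (\ref{5.1920220430}) the sub-Gaussian maximal inequality gives $\max_i|\varepsilon_i-U_i\beta^*|=O_p(\sqrt{\log n})$ and $\max_i|X_i|=O_p(\sqrt{\log n})$, so $\|\bV-\bZ\bgamma_0\|_\infty=O_p(\sqrt{\log n})=o_p(\sqrt n)=o_p(\mu_{\bgamma})$.

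For the quadratic constraint (\ref{5.2020220430}) I would write $V_i=(V_i-\bZ_i^{\top}\bgamma_0)+\bZ_i^{\top}\bgamma_0$ to get
$$n^{-1}\bV^{\top}(\bV-\bZ\bgamma_0)=\frac{1}{n}\sum_{i=1}^n(V_i-\bZ_i^{\top}\bgamma_0)^2+\frac{1}{n}\sum_{i=1}^n(\bZ_i^{\top}\bgamma_0)(V_i-\bZ_i^{\top}\bgamma_0).$$
Expanding $(V_i-\bZ_i^{\top}\bgamma_0)^2=(\varepsilon_i-U_i\beta^*)^2+2hn^{-1/2}X_i(\varepsilon_i-U_i\beta^*)+h^2n^{-1}X_i^2$, the law of large numbers gives $n^{-1}\sum_i(\varepsilon_i-U_i\beta^*)^2\to\sigma^2_\varepsilon+\sigma_U^2\beta^{*2}$ in probability (the cross term dropping out since $E[\varepsilon_1U_1]=0$), while the remaining two sums are $O_p(n^{-1})$. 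The second sum equals $hn^{-3/2}\sum_i(\bZ_i^{\top}\bgamma_0)X_i+n^{-1}\sum_i(\bZ_i^{\top}\bgamma_0)(\varepsilon_i-U_i\beta^*)$, and both terms are $O_p(n^{-1/2})$ — the first by Markov's inequality and Cauchy--Schwarz using $E[(\bZ_1^{\top}\bgamma_0)^2]=\bgamma_0^{\top}\bSigma_{Z,Z}\bgamma_0\le\bar{C}\kappa^2$ and $E[X_1^2]\le\bar{C}$, the second because it has mean zero and variance of order $n^{-1}E[(\bZ_1^{\top}\bgamma_0)^2]\le n^{-1}\bar{C}\kappa^2$ by independence. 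Hence $n^{-1}\bV^{\top}(\bV-\bZ\bgamma_0)\to\sigma^2_\varepsilon+\sigma_U^2\beta^{*2}$ in probability, which strictly exceeds $\rho_{\bgamma}$ by the choice of tuning parameter, so (\ref{5.2020220430}) holds with probability tending to one. Intersecting the three high-probability events completes the proof.

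The main obstacle is not a single hard estimate but the combination of controlling the gradient bound (\ref{5.1820220430}) uniformly over the $p$ growing coordinates while simultaneously carrying the local-alternative drift $hn^{-1/2}X_i$ through all three constraints — a term absent from the pure-null feasibility arguments in prior work and from the analogous measurement-error-free results. One must verify that after multiplication by $\bZ$ this drift stays of smaller order than $\eta_{\bgamma}$, and that folding $U$ into the effective noise does not spoil the mean-zero property the concentration bound relies on; both are true because $U$ is independent of $(\bX,\bZ,\bepsilon)$ with known variance.
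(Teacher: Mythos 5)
Your proposal is correct and follows essentially the same route as the paper's proof: the same residual identity $V_i-\bZ_i^{\top}\bgamma_0=hn^{-1/2}X_i+\varepsilon_i-U_i\beta^*$, the same sub-exponential concentration (the paper's Lemma A.1) plus a bias bound via $\|\btheta_0\|_2\le\kappa$ for the gradient constraint, the same sub-Gaussian maximal inequality for the sup-norm constraint, and the same decomposition with the law of large numbers for the quadratic constraint. Your stated limit $\sigma^2_\varepsilon+\sigma_U^2\beta^{*2}$ for $n^{-1}\bV^{\top}(\bV-\bZ\bgamma_0)$ is in fact the correct one, matching the range prescribed for $\rho_{\bgamma}$.
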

\begin{proof}
It suffices for us to show that it holds with probability tending to $1$ that
\begin{eqnarray}
\left\|n^{-1} \mathbf{Z}^{\top}(\mathbf{V}-\mathbf{Z} {\bgamma_0})\right\|_{\infty} &\leq& \eta_{\bgamma}  \label{5.2920220502}\\ \|\mathbf{V}-\mathbf{Z} {\bgamma_0}\|_{\infty}& \leq& \mu_{\bgamma} \label{5.3020220502}\\
n^{-1} \mathbf{V}^{\top}(\mathbf{V}-\mathbf{Z} {\bgamma_0}) &\geq& \rho_\eta n^{-1}\|\mathbf{V}\|_{2}^{2}.\label{5.3120220502}
\end{eqnarray}
For (\ref{5.2920220502}), notice that $\bV-\bZ\bgamma_0=hn^{-1/2}\bX-U\beta^*+\bepsilon$. By Lemma \ref{lemma5.120220502},  we  have
\begin{eqnarray*}
&&P\left(\frac{1}{n}\left\|\sum_{i=1}^n\left[(hn^{-1/2}X_i+\varepsilon_i-U_i\beta^*)Z_{i}-\mbE\{(hn^{-1/2}X_i+\varepsilon_i-U_i\beta^*)Z_{i}\}\right]\right\|_{\infty}>t\right)\\
&\leq& pC^\prime \exp \left(-C n t^2\right).
\end{eqnarray*}
for constants $C, C^\prime>0$. We have 
\begin{eqnarray*}
&&|\mbE\{(hn^{-1/2}X_i+\varepsilon_i-U_i\beta^*)Z_{i,j}\}|= |\mbE(hn^{-1/2}Z_{i,j}\bZ_i^\top\btheta_0)|\\
&=&hn^{-1/2}(\bSigma_{Z,Z})_{j:}\btheta_0 \leq hn^{-1/2}\lambda_{max}(\bSigma)\|\btheta_0\|_2\leq G_1 n^{-1/2},
\end{eqnarray*}
for constant $G_1$.
Therefore,
\begin{eqnarray*}
&&P(\frac{1}{n}\|\bZ^\top(\bV-\bZ\bgamma_0)\|_\infty> \eta_{\bgamma})\\
&\leq& P(\frac{1}{n}\|\bZ^\top(\bV-\bZ\bgamma_0)-\mbE\{\bZ^\top(\bV-\bZ\bgamma_0)\}\|_\infty>\eta_{\bgamma}-G_1n^{-1/2})\\
&\leq& pC^\prime\exp\{C(\eta_{\bgamma}-G_1n^{1/2})\}.
\end{eqnarray*}
For a constant $C_1$ such that $\eta_{\bgamma}\geq C_1\log n\sqrt{n^{-1}\log p}$, we have  (\ref{5.2920220502}) holds with probability tending to 1.

To tackle (\ref{5.3020220502}), we would like to show that there exists a constant $C_2>0$ such that for any $\mu_{\bgamma}>C_2\sqrt{\log n}$, $\|\bV-\bZ\bgamma_0\|_\infty\leq\mu_{\bgamma}$ with probability tending to 1. Noticing that each element of  $\bV-\bZ\bgamma_0=hn^{-1/2}\bX-U\beta^*+\bepsilon$ is a sub-Gaussian random variable, therefore there exists a constant $G_2>0$ such that for any $C_2>0$, $P(\|\bV-\bZ\bgamma_0\|_\infty>C_2\sqrt{\log n})\leq n\exp(1-G_2C_2^2\log n).$ The results follows by taking $C_2$ large enough.

For (\ref{5.3120220502}), we are going to show that for any $c_0>0$ and $\rho_{\bgamma}\in(0,\sigma_\varepsilon^2+\sigma_U^2\beta^{*2}-c_0)$, we have
\begin{eqnarray}
P(\frac{1}{n}\mathbf{V}^{\top}(\mathbf{V}-\mathbf{Z} {\bgamma_0})\leq\rho_{\bgamma})\rightarrow 0.
\end{eqnarray}
Noticing that
\begin{eqnarray}
&&\dfrac{1}{n}\mathbf{V}^{\top}(\mathbf{V}-\mathbf{Z} {\bgamma_0})=\dfrac{1}{n}\sum_{i=1}^nV_i(V_i-\bZ_i^\top\bgamma_0)\\
&=&\dfrac{1}{n}\sum_{i=1}^n(n^{-1/2}hX_i+\varepsilon_i-U_i\beta^*+\bZ_i^\top\bgamma_0)(n^{-1/2}hX_i+\varepsilon_i-U_i\beta^*)\\
&=&\frac{1}{n}\sum_{i=1}^n(n^{-1/2}hX_i+\varepsilon_i-U_i\beta^*)^2+\frac{1}{n}\sum_{i=1}^n(n^{-1/2}hX_i+\varepsilon_i-U_i\beta^*)\bZ_i^\top\bgamma_0.
\end{eqnarray}
By the law of large numbers,
$$\dfrac{1}{n}\sum_{i=1}^n(n^{-1/2}hX_i+\varepsilon_i-U_i\beta^*)^2=\mbE(\varepsilon_i-U_i\beta^*)^2+o_P(1).$$
On the other hand,
\begin{eqnarray*}
&&\mbE\{(n^{-1/2}hX_i+\varepsilon_i-U_i\beta^*)\bZ_i^\top\bgamma_0\}\\&=&\mbE\{(n^{-1/2}h\bZ_i^\top\btheta_0+\varepsilon_i-U_i\beta^*)\bZ_i^\top\bgamma_0\}\\
&\leq&n^{-1/2}h\|\bgamma_0\|_2\|\btheta_0\|_2\lambda_{max}(\bSigma)=o_p(1),
\end{eqnarray*}
and
\begin{eqnarray*}
&&Var((n^{-1/2}hX_i+\varepsilon_i-U_i\beta^*)\bZ_i^\top\bgamma_0\}\\&\leq&\mbE\{(n^{-1/2}hX_i+\varepsilon_i-U_i\beta^*)^2(\bZ_i^\top\bgamma_0)^2\}\\
&\leq& \sqrt{\mbE(n^{-1/2}hX_i+\varepsilon_i-U_i\beta^*)^2 \mbE(\bZ_i^\top\bgamma_0)^2}
\end{eqnarray*}
 is bounded above.
Therefore $\dfrac{1}{n}\mathbf{V}^{\top}(\mathbf{V}-\mathbf{Z} {\bgamma_0})=\mbE(\varepsilon_i-U_i\beta^*)^2+o_P(1)=\sigma_U^2\beta^{*2}+o_P(1)$.
This ends the proof of Lemma \ref{lemma5.420220502}.
\end{proof}

\begin{lemma}\label{lemma5.520220502}
Let $\bX\in \mathbb{R}^n$, $\bZ\in\mathbb{R}^{n\times p}$ and $C\subset\mathbb{R}^p$ . Suppose there exists $\btheta_0\in C$ such that $\|n^{-1}\bZ^\top(\bX-\bZ\btheta_0)\|_\infty \leq \eta$. Let $\hat{\btheta}=\arg\min_{\btheta\in C}\|\btheta\|_1 $subject to $\|n^{-1}\bZ^\top(\bX-\bZ\btheta)\|_\infty\leq\eta$. If $s_{\btheta}=\|\btheta_0\|_0$ and if there exists $\kappa$ such that the restricted eigenvalue condition holds, that is
$$\min _{J_{0} \subseteq\{1, \ldots, p\}\atop \left|J_{0}\right| \leq s}\min _{\delta \neq 0\atop\left\|{\delta}_{J_{0}^{c}}\right\|_{1} \leq\left\|{\delta}_{J_{0}}\right\|_{1}}\frac{\left\| \bZ {\delta}\right\|_{2}}{\sqrt{n}\left\|{\delta}_{J_{0}}\right\|_{2}} \geq \kappa.$$
Then for $\delta=\hat{\btheta}-\btheta_0$ it holds that $\|\delta\|_1\leq 8\eta s \kappa^{-2}$ and $n^{-1} {\delta}^{\top} \bZ^\top \bZ{\delta} \leq 16 \eta^{2} s \kappa^{-2}.$
\end{lemma}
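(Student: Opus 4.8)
The plan is to run the classical Dantzig-selector argument (Bickel--Ritov--Tsybakov / Cand\`es--Tao style): first derive a cone condition on the error vector $\delta = \hat{\btheta}-\btheta_0$ from $\ell_1$-optimality, then combine a feasibility bound on $\|n^{-1}\bZ^\top\bZ\delta\|_\infty$ with the restricted eigenvalue inequality through a self-bounding quadratic. This is a purely deterministic statement given the two hypotheses (feasibility of $\btheta_0$ and the RE condition), so no probabilistic input is needed inside the lemma itself.

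First I would use optimality. Since $\btheta_0$ is feasible and $\hat{\btheta}$ minimizes the $\ell_1$ norm over the feasible set, $\|\hat{\btheta}\|_1 \le \|\btheta_0\|_1$. Let $J_0 = \mathrm{supp}(\btheta_0)$, so $|J_0|\le s$. Decomposing $\|\hat{\btheta}\|_1 = \|(\btheta_0+\delta)_{J_0}\|_1 + \|\delta_{J_0^c}\|_1$ and using $\|(\btheta_0+\delta)_{J_0}\|_1 \ge \|\btheta_0\|_1 - \|\delta_{J_0}\|_1$ gives the cone condition $\|\delta_{J_0^c}\|_1 \le \|\delta_{J_0}\|_1$. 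In particular $\delta$ lies in the cone over which the restricted eigenvalue bound is assumed, and $\|\delta\|_1 \le 2\|\delta_{J_0}\|_1 \le 2\sqrt{s}\,\|\delta_{J_0}\|_2$.

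Next I would bound the prediction quadratic form. Both $\hat{\btheta}$ and $\btheta_0$ satisfy the constraint $\|n^{-1}\bZ^\top(\bX-\bZ\,\cdot)\|_\infty \le \eta$, so by the triangle inequality $\|n^{-1}\bZ^\top\bZ\delta\|_\infty \le 2\eta$. H\"older's inequality then yields $n^{-1}\delta^\top\bZ^\top\bZ\delta \le \|\delta\|_1\,\|n^{-1}\bZ^\top\bZ\delta\|_\infty \le 2\eta\|\delta\|_1$. Now invoke the RE condition in the form $\|\delta_{J_0}\|_2 \le \kappa^{-1}\,(n^{-1}\|\bZ\delta\|_2^2)^{1/2}$, which with the previous step gives $\|\delta\|_1 \le 2\sqrt{s}\,\kappa^{-1}\,(n^{-1}\delta^\top\bZ^\top\bZ\delta)^{1/2}$. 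Substituting back, and writing $A := (n^{-1}\delta^\top\bZ^\top\bZ\delta)^{1/2}$, we obtain $A^2 \le 4\eta\sqrt{s}\,\kappa^{-1}A$, hence $A \le 4\eta\sqrt{s}\,\kappa^{-1}$; squaring gives $n^{-1}\delta^\top\bZ^\top\bZ\delta \le 16\eta^2 s\,\kappa^{-2}$, and feeding $A$ back into the $\ell_1$ bound gives $\|\delta\|_1 \le 8\eta s\,\kappa^{-2}$.

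I do not expect any real obstacle: the only points requiring care are checking that $\delta$ genuinely sits in the RE cone (done by the cone condition), handling the case where $\hat{\btheta}$ or $\btheta_0$ happens to have support larger than $s$ by restricting attention to an index set $J_0$ of size $\le s$, and tracking the constants cleanly through the quadratic inequality. The substantive work in the surrounding Theorem~\ref{them5.520220501} --- verifying probabilistically that $\btheta_0$ is feasible (Lemma~\ref{lemma5.420220502}) and that the RE condition holds (Condition~\ref{C1220220502}, via \citet{rudelson2012reconstruction}) --- lies outside this lemma.
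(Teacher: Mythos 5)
Your argument is correct and the constants come out exactly as stated: the cone condition from $\ell_1$-optimality, the bound $\|n^{-1}\bZ^\top\bZ\delta\|_\infty\le 2\eta$ from feasibility of both points, H\"older, and the restricted eigenvalue inequality combine into the self-bounding relation $A^2\le 4\eta\sqrt{s}\,\kappa^{-1}A$, yielding $n^{-1}\delta^\top\bZ^\top\bZ\delta\le 16\eta^2 s\kappa^{-2}$ and $\|\delta\|_1\le 8\eta s\kappa^{-2}$. The paper does not write this out — it simply cites Lemma 6 of \citet{bradic2020fixed} — and your derivation is precisely the standard deterministic Dantzig-selector argument underlying that cited result, so it is essentially the same approach, just made self-contained.
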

\begin{proof}
It follows from Lemma 6 in \citet{bradic2020fixed}.
\end{proof}

\noindent{\bf{Proof of Theorem \ref{them5.520220501}.}}

By Lemma \ref{lemma5.420220502}, we have that $P(\mathcal{A})\rightarrow 1,$ where
$$\mathcal{A}=\{\btheta_0 \text{ and } \bgamma_0 \text { are feasible for }  (\ref{5.1520220502})-(\ref{5.2020220430}) \text{ and } (\ref{5.2120220502})-(\ref{5.2420220502})\}.$$
This together with the restricted eigenvalue condition and Lemma \ref{lemma5.520220502} implies that with probability tending to 1, $\|\hat{\btheta}-\btheta_0\|=O_P(\eta_{\btheta} s_{\btheta})$ and $n^{-1}\|\bZ(\hat{\btheta}-\btheta_0)\|_2^2=O_p(\eta_{\btheta}^2 s_{\btheta}).$

Let $\hat{\varepsilon}_i=V_i-\bZ_i^\top\hat{\bgamma}.$ $\hat{\zeta}_i=W_i-\bZ_i^\top\hat{\btheta}$. Then $T=n^{-1/2}\sum_{i=1}^n(\hat{\zeta}_i\hat\varepsilon_i+\sigma_U^2\beta^*)$. Define $\xi_i=n^{-1/2}\{(\eta_i+U_i)\hat\varepsilon_i+\sigma^2_U\beta^{*}\}$. We have 
\begin{eqnarray*}
T-\sum_{i=1}^n\xi_i&=&n^{-1/2}\sum_{i=1}^n \bZ_i^\top(\btheta_0-\hat{\btheta})\hat\varepsilon_i\\
&\leq&n^{-1/2}\|\bZ^\top(\bV-\bX\hat{\bgamma})\|_\infty\|\hat{\btheta}-\btheta_0\|_1=O_P(\sqrt{n}\eta_{\bgamma}\eta_{\btheta}\|\btheta\|_0)=o_P(1).
\end{eqnarray*}
For the varaince term, we have 
\begin{eqnarray*}
\hat\sigma^2-\sum_{i=1}^n\xi^2_i&=&\frac{2}{n}\sum_{i=1}^n\bZ_i^\top(\hat{\btheta}-\btheta_0)(\eta_i+U_i)\hat\varepsilon_i+\frac{2}{n}\sum_{i=1}^n\bZ_i^\top(\btheta_0-\hat{\btheta})\sigma_U^2\beta^*+\frac{1}{n}\|\bZ(\hat{\btheta}-\btheta_0)\|_2^2\\
&:=&D_{n1}+D_{n2}+D_{n3}.
\end{eqnarray*}
$D_{n1}\leq2n^{-1}\|\hat{\btheta}-\btheta_0\|_1\|\bZ^\top(\bV-\bZ^\top \hat{\bgamma})\|_\infty \max\limits_{1\leq i\leq n}(\eta_i+U_i)=O_P(\eta_{\bgamma} \eta_{\btheta}\|\btheta_0\|_0\log n)=o_P(1).$
$D_{n2}=O_P(\max\limits_{1\leq i\leq n, 1\leq j\leq p}Z_{i,j}\|\hat{\btheta}-\btheta_0\|_1)=O_p((\log p)(\log n) \eta_{\btheta} s_{\btheta})=o_p(1).$
$D_{n3}=O_P(\eta_{\bgamma}^2s_{\bgamma})=o_P(1)$ by Lemma \ref{lemma5.520220502}.

Due to condition \ref{C1320220502}, it now suffices for us to show that $\dfrac{\sum_{i=1}^n\xi_i}{\sqrt{\sum_{i=1}^n}\xi_i^2}\stackrel{d}\rightarrow N(0,1)$. Define $\mathcal{F}_{n,i}$ as the $\sigma$-algebra generated by $(\bX,\bepsilon,\eta_1,\cdots,\eta_i,U_1,\cdots,U_i)$. 
We have 
\begin{eqnarray*}
\mbE(\xi_{i+1}|\mathcal{F}_{n,i})&=&n^{-1/2}\mbE(\eta_{i+1}(\varepsilon_{i+1}-U_{i+1}\beta^*+Z_{i+1}^\top\bgamma_0-Z_{i+1}^\top\hat{\bgamma})|\mathcal{F}_{n,i})+\\
&&n^{-1/2}\mbE(U_{i+1}(\varepsilon_{i+1}-U_{i+1}\beta^*+Z_{i+1}^\top\bgamma_0-Z_{i+1}^\top\hat{\bgamma})|\mathcal{F}_{n,i})-\sigma_U^2\beta^*\\
&=& 0.
\end{eqnarray*}

We are going to apply the martingale central limit theorem. By Theorem 3.4 of \citet{hall2014martingale}, it suffices for us to verify the following conditions.
\begin{itemize}    
\item[(i)] $\mbE \max_{1\leq n}\xi_i^2=o(1)$.
\item[(ii)] $\max_{1\leq n}|\xi_i|=o_p(1)$.
\end{itemize}
Using the inequality $(a+b)^2 \leq 2a^2+2b^2$,
\begin{eqnarray*}
E \max_{1\leq i\leq n}\xi_i^2&\leq&2 \mbE( n^{-1} \max_{1\leq i\leq n}(\eta_i+U_i)^2\|\bV-\bZ\hat{\bgamma}\|_\infty) +2\sigma_U^2\beta^*/n\\
&\leq&2n^{-1}\mu_{\bgamma} \mbE(\max_{1\leq i\leq n}(\eta_i+U_i)^2)+o(1)\\
&=&O_P((\log^{3/2} n)/n)+o(1)=o_p(1).
\end{eqnarray*}
The last equality is due to the fact that $\eta_i+U_i$ is  sub-Gaussian and $\mu_{\bgamma}=O(\sqrt{\log n})$.
Therefore we show claim (i) holds. Claim (ii) follows by Claim (i).


\end{document}